\newtheorem{theorem}{Theorem}
\newtheorem{lemma}[theorem]{Lemma}
\theoremstyle{plain}
\theoremstyle{remark}
\theoremstyle{definition}
\newtheorem*{definition*}{Definition}
\begin{document}

%Title of paper
\title{Exact distributions for stochastic models of gene expression with arbitrary regulation}

% repeat the \author .. \affiliation  etc. as needed
% \email, \thanks, \homepage, \altaffiliation all apply to the current
% author. Explanatory text should go in the []'s, actual e-mail
% address or url should go in the {}'s for \email and \homepage.
% Please use the appropriate macro foreach each type of information

% \affiliation command applies to all authors since the last
% \affiliation command. The \affiliation command should follow the
% other information
% \affiliation can be followed by \email, \homepage, \thanks as well.
\author{Zihao Wang, Zhenquan Zhang, Tianshou Zhou\thanks{School of Mathematics, Sun Yat-sen University, Guangzhou {\rm510275}, China. E-mail: {mcszhtsh@mail.sysu.edu.cn}}
}

\date{\today}
\maketitle

\begin{abstract}
Stochasticity in gene expression can result in fluctuations in gene product levels. Recent experiments indicated that feedback regulation plays an important role in controlling the noise in gene expression. A quantitative understanding of the feedback effect on gene expression requires analysis of the corresponding stochastic model. However, for stochastic models of gene expression with general regulation functions, exact analytical results for gene product distributions have not been given so far. Here, we propose a technique to solve a generalized ON-OFF model of stochastic gene expression with arbitrary (positive or negative, linear or nonlinear) feedbacks including posttranscriptional or posttranslational regulation. The obtained results, which generalize results obtained previously, provide new insights into the role of feedback in regulating gene expression. The proposed analytical framework can easily be extended to analysis of more complex models of stochastic gene expression.
\end{abstract}

%%%%%%%%%%%%%%%%%%%%%%%%%%%%%%%%%%%
\section{Introduction}

Gene expression is a complex process: Apart from fundamental sub-processes such as transcription and translation described by the central dogma in biology, it also involves other sub-processes such as switching between promoter activity states, stochastic partitioning at cell division \cite{s13}, feedback regulation, and posttranscriptional or posttranslational regulation. Since these sub-processes are biochemical, fluctuations (or the noise) in the levels of gene products (mRNA and protein) are inevitable, implying that gene expression is inherently noisy. This molecular noise (also called cell-to-cell variability in gene expression) can carry out important biological functions. For example, in unicellular organisms, the noise can improve fitness by inducing phenotypic differences within a population of genetically identical cells, enabling a rapid response to a fluctuating environment and thus enhancing the chance of cell survival in this environment \cite{s2,s4,s3,s24,s30,s38}. Also for example, in multi-cellular organisms, the noise plays an important role in development, e.g., it allows identical progenitor cells to acquire distinct phenotypes for better survival \cite{s8,s31}. Because of the functional importance of molecular noise, an important task in the post-genome era is to understand how different regulatory mechanisms control variations in mRNA and protein levels across a clone population of cells. Quantifying the impact of gene expression noise using stochastic models is also an important step towards understanding intracellular processes \cite{s9,s10,s17,s19,s26,s27,s28,s18}.

Although a variety of factors can affect gene expression levels in different ways, experimental measurements support two kinetic modes of gene expression: the constitutive mode in which gene products are synthesized in stochastic and uncorrelated events \cite{s22,s40}, and the bursty mode in which gene products are generated in a manner of high activity followed by a long refractory period \cite{s12,s25,s29}. Moreover, the latter mode is more common than the former mode in prokaryotic cells \cite{s5,s32,s39}. Single cell measurements have provided evidence for transcriptional or translational bursting (i.e., production of mRNAs or proteins in bursts) \cite{s7,s12,s29}. Although the molecular sources of generating bursts remain poorly understood \cite{s6}, several lines of evidence \cite{s6,s21,s34,s37} have pointed to switching between active (ON) and inactive (OFF) promoter states as an important source of gene expression noise, which is responsible for generating heterogeneity in the response of isogenic cells to the same stimulus. It has been demonstrated in yeast cells that high levels of cell-to-cell variability, originated by slow promoter state fluctuations, may confer cell colonies with an enhanced probability of survival when subjected to external stresses such as addition of high concentrations of antibiotic \cite{s1}. In this paper, we will adopt the extensively used ON-OFF model of stochastic gene expression for analysis.

As a ubiquitous mechanism of controlling signals, feedback has been identified in various gene regulatory systems in prokaryotic or eukaryotic cells. For example, 40\% of \emph{E. coli} transcription factors negatively self-regulate transcription of their own genes \cite{s33}. It was shown that for simple (e.g., linear) feedback regulation, Paulsson showed that positive feedback amplifies the gene expression noise whereas negative feedback reduces the noise \cite{s26}; subsequently, Hornung and Barkai showed that negative feedback in fact amplifies rather than reduces the noise when parameters are chosen to preserve system sensitivity and if the intrinsic noise is negligible, while positive feedback reduces the noise when susceptibility (i.e., steady state sensitivity) is controlled \cite{s16}. We ever showed that when system sensitivity is maintained, either there exists a minimum of the output noise intensity with a biologically feasible feedback strength, or the output noise intensity is a monotonic function of feedback strength bounded by both biological and dynamical constraints \cite{s41}. In spite of these, we note that the noise used in these works, which is defined as its variance normalized by the square of its mean (noise intensity) or the ratio of the variance over the mean (Fano factor), would not correctly characterize stochastic fluctuations since the underlying distributions may be bimodal \cite{s35} or tail-weighted \cite{s37}.

The statistics and dynamics of stochastic gene expression are best characterized by the probability mass function, $P\left( n;t \right)$, i.e., the probability that there are exactly $n$ mRNA or protein molecules of a gene of interest at time $t$ in a single cell. Previous studies have derived analytical gene product distributions in common two-state model of stochastic gene expression \cite{s20,s29,s32,s36,s42,s43}, or in similar gene models with linear feedback \cite{s14,s15,s23}. However, transcription factors regulate gene expression often in a nonlinear fashion. Moreover, the corresponding regulation functions usually take Hill-type forms \cite{s1}. For two-state models of stochastic gene expression with nonlinear feedback regulation, exact analytical results for gene product distributions have not been obtained so far. This motivates the study of this paper.

Here, we develop a new technique to derive the exact steady-state protein distribution in a generalized ON-OFF model of stochastic gene expression with arbitrary feedbacks, where by ¡®arbitrary¡¯ we mean that feedback regulation may be positive or negative, linear or nonlinear, and even posttranscriptional or posttranslational. The derived distributions provide new insights into the role of feedback in regulating the gene expression noise.

The rest of the paper is organized as follows: Section 2 describes a gene model to be studied and gives its mathematical equation. Section 3 derives the explicit expressions of stationary protein distributions. Section 4 reproduces known protein distributions. And section 5 concludes this paper and gives a brief discussion.

\section{A general gene model and its mathematical equation}
In order to model the bursty expression of a gene, we assume that the gene promoter has one active (ON) state where the gene is expressed and one inactive (OFF) state where the gene is not expressed, and there are stochastic transitions from OFF to ON states and vice versa. Also assume that each mRNA degrades instantaneously after producing a protein molecule, and the produced protein molecules can, as transcription factors, self-regulate the switching rates from ON (OFF) to OFF (ON) states as well as the synthesis rate of the protein. Finally, the produced protein is assumed to degrade in a linear manner with a constant rate.

Denote by $X$ the protein, which is a random variable. Let $n$ represent the number of protein molecules and $\delta $ be the protein degradation rate. Then, under the above assumed conditions, the biochemical reactions for the gene model are listed below
\begin{equation}\label{eq:1}
\begin{split}
 &\text{OFF}\xrightarrow{{{K}_{\text{1}}}\left( n \right)}\text{ON,} \\
  &\text{ON}\xrightarrow{{{K}_{\text{2}}}\left( n \right)}\text{OFF,} \\
  &\text{ON}\xrightarrow{{{K}_{\text{3}}}\left( n \right)}\text{ON}+\text{X,} \\
  &\text{X}\xrightarrow{{{K}_{4}}\left( n \right)}\varnothing , \\
 \end{split}
\end{equation}
where functions ${{K}_{i}}\left( n \right)$ ($1\le i\le 4$), which characterize auto-regulations, should be understood as reaction propensity functions, and in particular, $ {{K}_{4}}\left( n \right)=n\delta $. Without loss of generality, we assume that regulating functions ${{K}_{i}}\left( n \right)$ take Hill-type forms that will be specified. Note that if ${{K}_{1}}\left( n \right)$ is not a constant, this corresponds to positive feedback; if ${{K}_{2}}\left( n \right)$ is not a constant, this corresponds to negative feedback; and if ${{K}_{3}}\left( n \right)$ is not a constant, this corresponds to posttranscriptional or posttranslational regulation. In addition, if all ${{K}_{i}}\left( n \right)$ ($1\le i\le 3$) are constants, the corresponding gene model is just the common ON-OFF model of stochastic gene expression. Therefore, the model described by \eqref{eq:1} includes almost gene models studied in the literature, and is therefore general.

Now, we establish a mathematical model in the sense of the chemical master equation \cite{s18} for the gene expression system described by \eqref{eq:1}. Let ${{P}_{0}}\left( n;t \right)$ and ${{P}_{1}}\left( n;t \right)$ represent the probabilities that the protein has $n$ molecules in OFF and ON states at time $t$, respectively. Assume that all the reaction events involved are Markovian, that is, the probabilities that the reaction events to happen depend only on the present state of the system, independent of the prior history. This hypothesis is made in almost all previous studies. In particular, the famous Gillespie stochastic simulation algorithm \cite{s11} is also based on the hypothesis. Then, the chemical master equation corresponding to reaction \eqref{eq:1} takes the form \cite{s18}
\begin{equation}\label{eq:2}
\begin{aligned}
   \frac{\partial {{P}_{0}}\left( n;t \right)}{\partial t}&=-{{K}_{1}}\left( n \right){{P}_{0}}\left( n;t \right)+{{K}_{2}}\left( n \right){{P}_{1}}\left( n;t \right)+\delta \left( \mathbb{E}-\text{I} \right)\left[ n{{P}_{0}}\left( n;t \right) \right], \\
  \frac{\partial {{P}_{1}}\left( n;t \right)}{\partial t}&={{K}_{1}}\left( n \right){{P}_{0}}\left( n;t \right)-{{K}_{2}}\left( n \right){{P}_{1}}\left( n;t \right)+\left( {{\mathbb{E}}^{-1}}-\text{I} \right)\left[ {{K}_{3}}\left( n \right){{P}_{1}}\left( n;t \right) \right]\\
  &+\delta \left( \mathbb{E}-\text{I} \right)\left[ n{{P}_{1}}\left( n;t \right) \right], \\
   \end{aligned}
\end{equation}
where $\mathbb{E}$ is the common step operator and ${{\mathbb{E}}^{-1}}$ is its inverse, and $\text{I}$ is the unit operator. Assume that the stationary distributions always exist (this has been numerically verified by analyzing a simple example). The steady-state equation corresponding to \eqref{eq:2} reads
\begin{equation}\label{eq:3}
\begin{split}
  & -{{{\tilde{K}}}_{1}}\left( n \right){{P}_{0}}\left( n \right)+{{{\tilde{K}}}_{2}}\left( n \right){{P}_{1}}\left( n \right)+\left( \mathbb{E}-\text{I} \right)\left[ n{{P}_{0}}\left( n \right) \right]=0, \\
 & {{{\tilde{K}}}_{1}}\left( n \right){{P}_{0}}\left( n \right)-{{{\tilde{K}}}_{2}}\left( n \right){{P}_{1}}\left( n \right)+\left( {{\mathbb{E}}^{-1}}-\text{I} \right)\left[ {{{\tilde{K}}}_{3}}\left( n \right){{P}_{1}}\left( n \right) \right]+\left( \mathbb{E}-\text{I} \right)\left[ n{{P}_{1}}\left( n \right) \right]=0, \\
   \end{split}
\end{equation}
where reaction propensity function ${{K}_{i}}\left( n \right)$ is normalized by the degradation rate, that is, ${{\tilde{K}}_{i}}\left( n \right)={{{K}_{i}}\left( n \right)}/{\delta }\;$ with $i=1,2,3$.

One main aim of this paper is to find the total stationary probability, $P\left( n \right)={{P}_{0}}\left( n \right)+{{P}_{1}}\left( n \right)$, based on \eqref{eq:3}. We point out that stationary distributions have been derived if ${{\tilde{K}}_{i}}\left( n \right)$ ($i=1,2,3$) are all constants \cite{s29,s42,s43}. However, if the normalized ${{\tilde{K}}_{i}}\left( n \right)$ are nonlinear functions of $n$, it seems to us that the analytical expression of steady-state protein distribution has not been derived from \eqref{eq:3} so far. In fact, if the form of ${{\tilde{K}}_{i}}\left( n \right)$ is general, directly solving \eqref{eq:3} is very difficult. We will develop a technique (in fact an analytical framework) to derive the formal expression of stationary protein distribution in a general case (i.e., ${{\tilde{K}}_{i}}\left( n \right)$ with $1\le i\le 3$ are arbitrary functions of $n$).

\section{The exact solution to the CME}
In order to derive the formal expression of stationary protein distribution, our basic idea is that we first take $P\left( 0 \right)$ and ${{P}_{0}}\left( 0 \right)$ as two parameters, then show that $P\left( n \right)$ and ${{P}_{1}}\left( n \right)$ can be formally expressed as the linear combinations of $P\left( 0 \right)$ and ${{P}_{0}}\left( 0 \right)$, and finally give the formal expressions of $P\left( 0 \right)$ and ${{P}_{0}}\left( 0 \right)$ according to the probability conservative condition.

For clarity, we establish the following theorem:
\begin{theorem}\label{1th}
The solution to \eqref{eq:3} can be formally expressed as
\begin{equation}\label{eq:4}
P\left( n \right)=\frac{1}{n!}\frac{{{a}_{n}}-C{{b}_{n}}}{1+\sum\nolimits_{i=1}^{\infty }{{\left( {{a}_{i}}-C{{b}_{i}} \right)}/{i!}\;}}, n=1,2,\cdots,
\end{equation}
where ${{a}_{1}}={{b}_{1}}={{\tilde{K}}_{3}}\left( 0 \right)$, ${{a}_{2}}={{\tilde{K}}_{3}}\left( 1 \right)\left( {{{\tilde{K}}}_{3}}\left( 0 \right)+{{{\tilde{K}}}_{2}}\left( 0 \right) \right)$, ${{b}_{2}}={{\tilde{K}}_{3}}\left( 1 \right)\left( {{{\tilde{K}}}_{3}}\left( 0 \right)+{{{\tilde{K}}}_{2}}\left( 0 \right)+{{{\tilde{K}}}_{1}}\left( 0 \right) \right)$, and for $n\ge 2$, we have
\begin{equation}\label{eq:4a}
\begin{aligned}
{{a}_{n+1}}=&\frac{{{{\tilde{K}}}_{3}}\left( n \right)\left( n-1+{{{\tilde{K}}}_{1}}\left( n-1 \right)+{{{\tilde{K}}}_{2}}\left( n-1 \right)+{{{\tilde{K}}}_{3}}\left( n-1 \right) \right)}{{{{\tilde{K}}}_{3}}\left( n-1 \right)}{{a}_{n}}\\
&-{{\tilde{K}}_{3}}\left( n \right)\left( n-1+{{{\tilde{K}}}_{1}}\left( n-1 \right) \right){{a}_{n-1}},
\end{aligned}
\end{equation}
\begin{equation}\label{eq:4b}
\begin{aligned}
{{b}_{n+1}}=&\frac{{{{\tilde{K}}}_{3}}\left( n \right)\left( n-1+{{{\tilde{K}}}_{1}}\left( n-1 \right)+{{{\tilde{K}}}_{2}}\left( n-1 \right)+{{{\tilde{K}}}_{3}}\left( n-1 \right) \right)}{{{{\tilde{K}}}_{3}}\left( n-1 \right)}{{b}_{n}}\\
&-{{\tilde{K}}_{3}}\left( n \right)\left( n-1+{{{\tilde{K}}}_{1}}\left( n-1 \right) \right){{b}_{n-1}}.
\end{aligned}
\end{equation}
In \eqref{eq:4},
\begin{equation}\label{eq:4c}
C=\underset{N\to \infty }{\mathop{\lim }}\,\frac{{{{\tilde{K}}}_{2}}\left( 0 \right)+\sum\nolimits_{i=1}^{N}{{\left[ \left( {{a}_{i}}+{{c}_{i}} \right){{{\tilde{K}}}_{2}}\left( i \right)+{{c}_{i}}{{{\tilde{K}}}_{1}}\left( i \right) \right]}/{i!}\;}}{{{{\tilde{K}}}_{1}}\left( 0 \right)+{{{\tilde{K}}}_{2}}\left( 0 \right)+\sum\nolimits_{i=1}^{N}{{\left[ \left( {{b}_{i}}+{{d}_{i}} \right){{{\tilde{K}}}_{2}}\left( i \right)+{{d}_{i}}{{{\tilde{K}}}_{1}}\left( i \right) \right]}/{i!}\;}},
\end{equation}
where
\begin{equation}\label{eq:4d}
{{c}_{n}}=\sum\limits_{i=1}^{n-1}{{{{\tilde{K}}}_{2}}\left( i \right){{a}_{i}}\prod\limits_{j=i+1}^{n-1}{\left( j+{{{\tilde{K}}}_{1}}\left( j \right)+{{{\tilde{K}}}_{2}}\left( j \right) \right)}}+{{\tilde{K}}_{2}}\left( 0 \right)\prod\limits_{i=1}^{n-1}{\left( i+{{{\tilde{K}}}_{1}}\left( i \right)+{{{\tilde{K}}}_{2}}\left( i \right) \right)},
\end{equation}
\begin{equation}\label{eq:4e}
{{d}_{n}}=\sum\limits_{i=1}^{n-1}{{{{\tilde{K}}}_{2}}\left( i \right){{b}_{i}}\prod\limits_{j=i+1}^{n-1}{\left( j+{{{\tilde{K}}}_{1}}\left( j \right)+{{{\tilde{K}}}_{2}}\left( j \right) \right)}}+\prod\limits_{i=0}^{n-1}{\left( i+{{{\tilde{K}}}_{1}}\left( i \right)+{{{\tilde{K}}}_{2}}\left( i \right) \right)}.
\end{equation}
\end{theorem}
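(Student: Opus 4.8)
The plan is to follow the three-stage strategy announced just before the theorem: treat $P(0)$ and $P_0(0)$ as free parameters, express everything recursively in terms of them, and then fix them by normalization. To implement the first stage I would work directly with the steady-state system \eqref{eq:3}. Summing the two equations in \eqref{eq:3} kills the exchange terms $\mp\tilde K_1 P_0 \pm \tilde K_2 P_1$ and gives a single equation relating $P(n)=P_0(n)+P_1(n)$ to $\tilde K_3(n)P_1(n)$, namely $(\mathbb{E}^{-1}-\mathrm{I})[\tilde K_3(n)P_1(n)] + (\mathbb{E}-\mathrm{I})[nP(n)]=0$. Because the left-hand side is a sum of two "total difference" operators, the standard trick is to sum this identity from $n=0$ (or a suitable lower index) upward and use the boundary behaviour $P(n),P_1(n)\to 0$; this telescopes to the flux-balance relation
\begin{equation}\label{eq:plan-flux}
\tilde K_3(n-1)P_1(n-1) = n\,P(n),\qquad n\ge 1,
\end{equation}
which is the discrete analogue of "probability current $=0$". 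This already shows $P(n)$ is determined by $P_1(n-1)$, so it remains to get a closed recursion for the $P_1$'s alone.

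For the second stage I would substitute \eqref{eq:plan-flux} back into (say) the first line of \eqref{eq:3}, eliminating $P_0$ in favour of $P$ via $P_0(n)=P(n)-P_1(n)$ and then re-expressing $P(n)$, $P(n+1)$ through $\tilde K_3(n-1)P_1(n-1)$ and $\tilde K_3(n)P_1(n)$. The outcome should be a three-term linear recursion for the sequence $Q_n := P_1(n)$ (equivalently for $n!\,P(n+1)=\tilde K_3(n)Q_n$), with coefficients exactly the rational functions of $\tilde K_1,\tilde K_2,\tilde K_3$ appearing in \eqref{eq:4a}–\eqref{eq:4b}; the factor $1/\tilde K_3(n-1)$ there is the tell-tale sign that one has divided through by $\tilde K_3(n-1)$ to turn $\tilde K_3(n-1)P_1(n-1)$ back into $P_1(n-1)$. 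A second-order linear recursion has a two-dimensional solution space, and the natural basis is: the solution $\{a_n\}$ with "initial data coming from $P(0)$" and the solution $\{b_n\}$ carrying the extra $\tilde K_1(0)$-term that encodes the dependence on $P_0(0)$. Matching the low-order terms of \eqref{eq:3} at $n=0$ — which read $-\tilde K_1(0)P_0(0)+\tilde K_2(0)P_1(0)+P_0(1)=0$ and its companion — pins down $a_1=b_1=\tilde K_3(0)$ and the stated values of $a_2,b_2$, and identifies the right linear combination as $a_n - C b_n$ with $C$ the (as yet undetermined) ratio $P_0(0)/P(0)$ up to known constants. This gives $P(n) = \frac{1}{n!}(a_n - C b_n)\,P(0)$, and $\sum_n P(n)=1$ produces the denominator in \eqref{eq:4}.

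The third stage is to identify $C$ explicitly, i.e. to show $C=P_0(0)/P(0)$ is forced to equal \eqref{eq:4c}. The idea is that we have so far only used the \emph{sum} of the two equations in \eqref{eq:3}; the \emph{difference} (or equivalently one of the two equations separately) gives an additional constraint that must be consistent, and this consistency is what determines $C$. Concretely, I would derive a recursion for $P_0(n)$ — or for the flux $\tilde K_1(n)P_0(n) - \tilde K_2(n)P_1(n)$, which by the first line of \eqref{eq:3} equals $(\mathbb{E}-\mathrm{I})[nP_0(n)]$ — solve it by variation of parameters against the already-known $P_1(n)$, and recognize the solution as $P_0(n) = \frac{1}{n!}(c_n - C d_n)\cdot(\text{const})$ with $c_n,d_n$ the particular-solution sums in \eqref{eq:4d}–\eqref{eq:4e}; the two homogeneous-type products $\prod (j+\tilde K_1(j)+\tilde K_2(j))$ and the convolution sums $\sum \tilde K_2(i)a_i\prod(\cdots)$ are exactly what a discrete Duhamel/summation-by-parts computation throws up. Then imposing $\sum_n P_1(n) = \sum_n \bigl(P(n)-P_0(n)\bigr)$, or equivalently using $\sum_n \tilde K_2(n)P_1(n)$ computed two different ways (once from the $P$-representation, once from the $P_0$-representation, together with the $n=0$ balance $\tilde K_1(0)P_0(0)=\tilde K_2(0)P_1(0)+P_1(1)\cdot(\text{stuff})$), yields a single linear equation for $C$ whose solution is the limit \eqref{eq:4c}. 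The limit appears because the consistency must hold for the infinite system, so one equates the tails of two convergent series.

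The main obstacle I anticipate is bookkeeping in the third stage: getting the particular solutions $c_n,d_n$ in closed form and, above all, checking that the two independent ways of computing a global quantity (the normalization of $P_1$, or a weighted sum like $\sum \tilde K_2(n)P_1(n)$) really do collapse to the clean ratio \eqref{eq:4c} rather than to something messier. The first two stages are essentially forced — telescoping \eqref{eq:3}, reading off a three-term recursion, and splitting into a two-parameter family — but the identification of $C$ requires carefully using the one scalar degree of freedom that the summed equation discards, and making sure the discrete variation-of-parameters sums are indexed correctly (the shifts $i+1$ vs $i$, and the endpoint terms $\tilde K_2(0)\prod(\cdots)$ and $\prod_{i=0}^{n-1}(\cdots)$ in \eqref{eq:4d}–\eqref{eq:4e}, are precisely where sign or off-by-one errors would hide). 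Convergence of all the series involved (so that the limit in \eqref{eq:4c} exists and $\sum (a_i - Cb_i)/i! $ converges) I would handle at the end, invoking the standing assumption that a stationary distribution exists together with the super-exponential $1/n!$ damping.
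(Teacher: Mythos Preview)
Your plan is essentially the paper's: derive the flux relation \eqref{eq:plan-flux} (the paper's \eqref{eq:5}), express everything as a two-parameter family in $P(0),P_0(0)$, then fix their ratio $C$ and normalize. Your Stage~2 is in fact cleaner than the paper's route: substituting \eqref{eq:plan-flux} into the first line of \eqref{eq:3} and eliminating $P_0,P_1$ in favour of $P$ gives the three-term recursion \eqref{eq:4a}--\eqref{eq:4b} directly, whereas the paper first proves summation formulas for $a_n,b_n$ by induction (its Lemma~\ref{lm:2}, your equations \eqref{eq:9a}--\eqref{eq:9b} would appear there) and only afterwards collapses them to the three-term form in an appendix.

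Your Stage~3, however, contains a conceptual slip worth correcting before you compute. You say ``so far only the sum of the two equations has been used'' and that the remaining equation supplies the constraint fixing $C$ --- but in your own Stage~2 you already substituted into the first line of \eqref{eq:3}, so both equations of \eqref{eq:3} are satisfied pointwise for \emph{every} choice of $(P(0),P_0(0))$. The scalar condition that selects $C$ is not an unused equation but a \emph{decay} requirement: for generic $C$ the resulting $P_0(n)$ does not tend to zero. The paper implements this by summing the first line of \eqref{eq:3} over $n=0,\dots,N$, telescoping the $(\mathbb{E}-\mathrm{I})[nP_0(n)]$ term to a single boundary contribution, assuming that boundary term vanishes as $N\to\infty$, and reading off $C$ from the resulting global promoter balance $\sum_n \tilde K_1(n)P_0(n)=\sum_n \tilde K_2(n)P_1(n)$ after inserting the representations $P_0(n)=\tfrac{1}{n!}[-c_nP(0)+d_nP_0(0)]$ and $P_1(n)=\tfrac{1}{n!}[(a_n+c_n)P(0)-(b_n+d_n)P_0(0)]$. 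Your candidate conditions --- $\sum P_1=\sum(P-P_0)$ (a tautology) or ``$\sum\tilde K_2 P_1$ computed two ways'' --- do not produce a constraint; replace them with this summed-balance-plus-vanishing-boundary argument and the rest of your outline goes through. (Minor bookkeeping: note the sign convention is $-c_n+Cd_n$, not $c_n-Cd_n$.)
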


In order to prove this theorem, we first sum up two equations in \eqref{eq:3}. This will yield
$$
\left( m+1 \right)P\left( m+1 \right)-mP\left( m \right)={{\tilde{K}}_{3}}\left( m \right){{P}_{1}}\left( m \right)-{{\tilde{K}}_{3}}\left( m-1 \right){{P}_{1}}\left( m-1 \right),
$$
where $m=0,1,2,\cdots $ and we define ${{\tilde{K}}_{3}}\left( -1 \right)=0$. Furthermore, summing up both sides of this equation over $m$ from $m=0$ to $m=n$ yields the following relationship
\begin{equation}\label{eq:5}
P\left( n+1 \right)=\frac{{{{\tilde{K}}}_{3}}\left( n \right)}{n+1}{{P}_{1}}\left( n \right),
\end{equation}
where $P\left( n \right)={{P}_{0}}\left( n \right)+{{P}_{1}}\left( n \right)$ and $n=0,1,2,\cdots $. Then, by substituting ${{P}_{1}}\left( n \right)=P\left( n \right)-{{P}_{0}}\left( n \right)$ into the first equation of \eqref{eq:3}, we have
$$
-{{\tilde{K}}_{1}}\left( n \right){{P}_{0}}\left( n \right)+{{\tilde{K}}_{2}}\left( n \right)\left[ P\left( n \right)-{{P}_{0}}\left( n \right) \right]+\left( n+1 \right){{P}_{0}}\left( n+1 \right)-n{{P}_{0}}\left( n \right)=0,
$$
which can be rewritten as
$$
{{P}_{0}}\left( n+1 \right)=\frac{{{{\tilde{K}}}_{1}}\left( n \right)+{{{\tilde{K}}}_{2}}\left( n \right)+n}{n+1}{{P}_{0}}\left( n \right)-\frac{{{{\tilde{K}}}_{2}}\left( n \right)}{n+1}P\left( n \right),
$$
or
\begin{equation}\label{eq:6}
{{P}_{0}}\left( n \right)=\frac{n-1+{{{\tilde{K}}}_{1}}\left( n-1 \right)\text{+}{{{\tilde{K}}}_{2}}\left( n-1 \right)}{n}{{P}_{0}}\left( n-1 \right)-\frac{{{{\tilde{K}}}_{2}}\left( n-1 \right)}{n}P\left( n-1 \right),
\end{equation}
where $n=1,2,\cdots $. By the mathematical induction, we can easily prove the following lemma.

\begin{lemma}\label{lm:1}
If ${{x}_{n}}={{a}_{n}}{{x}_{n-1}}+{{b}_{n}}$, where $n=1,2,\cdots $, then ${{x}_{n}}={{x}_{0}}\prod\limits_{i=1}^{n}{{{a}_{i}}}+\sum\limits_{i=1}^{n-1}{{{b}_{i}}\prod\limits_{j=i}^{n-1}{{{a}_{j+1}}}+{{b}_{n}}}$.
\end{lemma}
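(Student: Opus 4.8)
The plan is to prove Lemma~\ref{lm:1} by a straightforward induction on $n$, adopting the standard conventions that an empty product equals $1$ and an empty sum equals $0$ (so that the isolated term $b_n$ in the asserted formula is exactly the ``$i=n$'' summand, carrying the empty product $\prod_{j=n}^{n-1}a_{j+1}=1$). This keeps the boundary cases uniform and avoids special-casing.

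For the base case $n=1$, the recurrence gives $x_1=a_1x_0+b_1$, while the asserted formula gives $x_0\prod_{i=1}^{1}a_i+\sum_{i=1}^{0}\cdots+b_1=a_1x_0+0+b_1$, so the two agree. (Verifying $n=2$ by hand, $x_2=a_2a_1x_0+a_2b_1+b_2$, is a convenient sanity check but not logically necessary.)

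For the inductive step, suppose the formula holds for some $n\ge 1$. I would apply the recurrence $x_{n+1}=a_{n+1}x_n+b_{n+1}$, substitute the inductive hypothesis for $x_n$, and distribute $a_{n+1}$. The leading term becomes $x_0\prod_{i=1}^{n+1}a_i$; inside the sum the identity $a_{n+1}\prod_{j=i}^{n-1}a_{j+1}=\prod_{j=i}^{n}a_{j+1}$ upgrades each factor, giving $\sum_{i=1}^{n-1}b_i\prod_{j=i}^{n}a_{j+1}$; and the leftover term $a_{n+1}b_n=b_n\prod_{j=n}^{n}a_{j+1}$ is precisely the new ``$i=n$'' summand, which I absorb into the sum. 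This yields $x_{n+1}=x_0\prod_{i=1}^{n+1}a_i+\sum_{i=1}^{n}b_i\prod_{j=i}^{n}a_{j+1}+b_{n+1}$, i.e.\ the asserted formula with $n$ replaced by $n+1$, completing the induction.

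There is no genuine conceptual obstacle here; the only care required is index bookkeeping — recognizing that the stray $a_{n+1}b_n$ generated in the inductive step is exactly the top term of the enlarged sum, and handling the empty product/sum at $n=1$ consistently. (An alternative derivation divides the recurrence by $\prod_{i=1}^{n}a_i$ to obtain the telescoping difference $x_n/\prod_{i=1}^{n}a_i-x_{n-1}/\prod_{i=1}^{n-1}a_i=b_n/\prod_{i=1}^{n}a_i$ and then sums from $1$ to $n$; however, this requires every $a_i\ne 0$, whereas the induction argument needs no such hypothesis, so I would prefer the induction.)
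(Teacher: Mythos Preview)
Your proposal is correct and matches the paper's approach: the paper simply asserts that the lemma follows ``by the mathematical induction'' and gives no details, so your fully written-out induction (with the careful handling of empty products/sums and the absorption of $a_{n+1}b_n$ as the new top summand) is exactly what the paper omits.
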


When this lemma is applied to \eqref{eq:5}, ${{P}_{0}}\left( n \right)$ can be expressed as
\begin{equation}\label{eq:7}
\begin{aligned}
{{P}_{0}}\left( n \right)=&\frac{1}{n!}{{P}_{0}}\left( 0 \right)\prod\limits_{i=0}^{n-1}{\left( i+{{{\tilde{K}}}_{1}}\left( i \right)+{{{\tilde{K}}}_{2}}\left( i \right) \right)}\\
&-\sum\limits_{i=0}^{n-2}{\frac{{{{\tilde{K}}}_{2}}\left( i \right)}{i+1}P\left( i \right)\prod\limits_{j=i+1}^{n-1}{\frac{j+{{{\tilde{K}}}_{1}}\left( j \right)+{{{\tilde{K}}}_{2}}\left( j \right)}{j+1}}}-\frac{{{{\tilde{K}}}_{2}}\left( n-1 \right)}{n}P\left( n-1 \right).
\end{aligned}
\end{equation}
Thus,
$$
\begin{aligned}
  {{P}_{1}}\left( n \right)=&P\left( n \right)-{{P}_{0}}\left( n \right)\\
  =&P\left( n \right)-\frac{1}{n!}{{P}_{0}}\left( 0 \right)\prod\limits_{i=0}^{n-1}{\left( i+{{{\tilde{K}}}_{1}}\left( i \right)+{{{\tilde{K}}}_{2}}\left( i \right) \right)}\\
  &+\sum\limits_{i=0}^{n-2}{\frac{{{{\tilde{K}}}_{2}}\left( i \right)}{i+1}P\left( i \right)\prod\limits_{j=i+1}^{n-1}{\frac{j+{{{\tilde{K}}}_{1}}\left( j \right)+{{{\tilde{K}}}_{2}}\left( j \right)}{j+1}}}-\frac{{{{\tilde{K}}}_{2}}\left( n-1 \right)}{n}P\left( n-1 \right).
\end{aligned}
$$
Substituting it into \eqref{eq:5} yields
\begin{equation}\label{eq:8}
\begin{aligned}
P\left( n\text{+}1 \right)=&\frac{{{{\tilde{K}}}_{3}}\left( n \right)}{n+1}P\left( n \right)+\frac{{{{\tilde{K}}}_{3}}\left( n \right)}{n+1}\sum\limits_{i=0}^{n-1}{\frac{{{{\tilde{K}}}_{2}}\left( i \right)}{i+1}P\left( i \right)\prod\limits_{j=i+1}^{n-1}{\frac{j+{{{\tilde{K}}}_{1}}\left( j \right)\text{+}{{{\tilde{K}}}_{2}}\left( j \right)}{j+1}}}\\
&-\frac{{{{\tilde{K}}}_{3}}\left( n \right)}{\left( n+1 \right)!}{{P}_{0}}\left( 0 \right)\prod\limits_{i=0}^{n-1}{\left( i+{{{\tilde{K}}}_{1}}\left( i \right)+{{{\tilde{K}}}_{2}}\left( i \right) \right)},
\end{aligned}
\end{equation}
where $n=1,2,\cdots $. Note that \eqref{eq:8} is an iterative system, so it is easily solved. In the following, we will take $P\left( 0 \right)$ and ${{P}_{0}}\left( 0 \right)$ as two parameters, which will be determined later. By the mathematical induction again, we can prove the following lemma, which is a main result of this paper.

\begin{lemma}\label{lm:2}
Stationary protein distribution $P\left( n \right)$ can be formally expressed as
\begin{equation}\label{eq:9}
P\left( n \right)=\frac{1}{n!}\left[ {{a}_{n}}P\left( 0 \right)-{{b}_{n}}{{P}_{0}}\left( 0 \right) \right], n=1,2,\cdots,
\end{equation}
where ${{a}_{1}}={{b}_{1}}={{\tilde{K}}_{3}}\left( 0 \right)$, ${{a}_{n}}$ and ${{b}_{n}}$ with $n\ge 2$ are determined according to the following formula respectively:
\begin{equation}\label{eq:9a}
\begin{aligned}
{{a}_{n}}=&{{{\tilde{K}}}_{3}}\left( n-1 \right){{a}_{n-1}}+{{{\tilde{K}}}_{3}}\left( n-1 \right)\sum\limits_{i=1}^{n-2}{{{{\tilde{K}}}_{2}}\left( i \right){{a}_{i}}\prod\limits_{j=i+1}^{n-2}{\left( j+{{{\tilde{K}}}_{1}}\left( j \right)+{{{\tilde{K}}}_{2}}\left( j \right) \right)}}\\
&+{{{\tilde{K}}}_{3}}\left( n-1 \right){{{\tilde{K}}}_{2}}\left( 0 \right)\prod\limits_{i=1}^{n-2}{\left( i+{{{\tilde{K}}}_{1}}\left( i \right)\text{+}{{{\tilde{K}}}_{2}}\left( i \right) \right)},
\end{aligned}
\end{equation}
\begin{equation}\label{eq:9b}
\begin{aligned}
{{b}_{n}}=&{{{\tilde{K}}}_{3}}\left( n-1 \right){{b}_{n-1}}+{{{\tilde{K}}}_{3}}\left( n-1 \right)\sum\limits_{i=1}^{n-2}{{{{\tilde{K}}}_{2}}\left( i \right){{b}_{i}}\prod\limits_{j=i+1}^{n-2}{\left( j+{{{\tilde{K}}}_{1}}\left( j \right)+{{{\tilde{K}}}_{2}}\left( j \right) \right)}} \\
&+{{{\tilde{K}}}_{3}}\left( n-1 \right)\prod\limits_{i=0}^{n-2}{\left( i+{{{\tilde{K}}}_{1}}\left( i \right)+{{{\tilde{K}}}_{2}}\left( i \right) \right)}.
\end{aligned}
\end{equation}
\end{lemma}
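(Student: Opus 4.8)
The plan is to prove Lemma~\ref{lm:2} by strong induction on $n$, using the iterative relation \eqref{eq:8}. It is convenient to first extend the coefficients by declaring $a_0=1$ and $b_0=0$, so that the target formula \eqref{eq:9} holds trivially also at $n=0$ (there it merely reads $P(0)=P(0)$); this lets the $i=0$ term of the inner sum in \eqref{eq:8} be treated on the same footing as the others. For the base case $n=1$, note that \eqref{eq:8} only produces $P(n+1)$ for $n\ge 1$, i.e.\ starting from $P(2)$; so I would obtain $P(1)$ instead from \eqref{eq:5} at $n=0$ together with $P_1(0)=P(0)-P_0(0)$, giving $P(1)=\tilde{K}_3(0)\bigl(P(0)-P_0(0)\bigr)=\tfrac{1}{1!}\bigl[a_1P(0)-b_1P_0(0)\bigr]$ with $a_1=b_1=\tilde{K}_3(0)$. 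Substituting $n=1$ into \eqref{eq:8} then reproduces the stated $a_2,b_2$ and serves as a useful sanity check.

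For the inductive step, assume \eqref{eq:9} for all $0\le k\le n$. I would substitute these expressions for $P(0),P(1),\dots,P(n)$ into the right-hand side of \eqref{eq:8} and multiply through by $(n+1)!$. The factorial and product prefactors collapse via $(n+1)!/\bigl((n+1)n!\bigr)=1$, $(n+1)!/(n+1)=n!$, $1/\bigl((i+1)i!\bigr)=1/(i+1)!$, and $\prod_{j=i+1}^{n-1}1/(j+1)=(i+1)!/n!$; the $n!$ and $(i+1)!$ telescope out, leaving
\begin{align*}
(n+1)!\,P(n+1)={}&\tilde{K}_3(n)\bigl[a_nP(0)-b_nP_0(0)\bigr]\\
&+\tilde{K}_3(n)\sum_{i=0}^{n-1}\tilde{K}_2(i)\bigl[a_iP(0)-b_iP_0(0)\bigr]\prod_{j=i+1}^{n-1}\bigl(j+\tilde{K}_1(j)+\tilde{K}_2(j)\bigr)\\
&-\tilde{K}_3(n)P_0(0)\prod_{i=0}^{n-1}\bigl(i+\tilde{K}_1(i)+\tilde{K}_2(i)\bigr).
\end{align*}
Collecting the coefficient of $P(0)$ and peeling off the $i=0$ term of the sum using $a_0=1$ yields exactly the right-hand side of \eqref{eq:9a} with $n$ replaced by $n+1$; collecting the coefficient of $-P_0(0)$ and using $b_0=0$ to annihilate the $i=0$ term yields \eqref{eq:9b} with $n$ replaced by $n+1$. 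Hence $P(n+1)=\tfrac{1}{(n+1)!}\bigl[a_{n+1}P(0)-b_{n+1}P_0(0)\bigr]$, which closes the induction.

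There is no genuine obstacle; the argument is pure bookkeeping. The three points that need care are: (i) using the \emph{strong} induction hypothesis, since \eqref{eq:8} involves all of $P(0),\dots,P(n)$ through the inner sum, not merely $P(n)$; (ii) the empty-product and empty-sum conventions at the edges (the $i=n-1$ term, and the cases $n=1,2$), which is why the extra conventions $a_0=1$, $b_0=0$ are adopted; and (iii) handling $P(1)$ from \eqref{eq:5} rather than from \eqref{eq:8}. The only mildly delicate part is verifying that the telescoped right-hand side above matches the stated closed forms for $a_{n+1}$ and $b_{n+1}$ term by term.
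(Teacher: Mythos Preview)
Your proposal is correct and follows essentially the same route as the paper: the base case $n=1$ comes from \eqref{eq:5} with $P_1(0)=P(0)-P_0(0)$, and the inductive step substitutes the strong hypothesis into \eqref{eq:8} and collects the $P(0)$ and $P_0(0)$ coefficients. The only cosmetic difference is that you introduce the conventions $a_0=1$, $b_0=0$ to absorb the $i=0$ term into the sum, whereas the paper keeps that term separated out explicitly; the resulting identities for $a_{n+1}$ and $b_{n+1}$ are identical.
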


\begin{proof}
By \eqref{eq:5}, we have $P\left( 1 \right)={{\tilde{K}}_{3}}\left( 0 \right){{P}_{1}}\left( 0 \right)={{\tilde{K}}_{3}}\left( 0 \right)\left[ P\left( 0 \right)-{{P}_{0}}\left( 0 \right) \right]$, implying that ${{a}_{1}}={{b}_{1}}={{\tilde{K}}_{3}}\left( 0 \right)$. Therefore, \eqref{eq:9} holds. Assume that \eqref{eq:9}, \eqref{eq:9a} and \eqref{eq:9b} hold for $n\le k$. Now, consider the case of $n=k+1$. In this case, it follows from \eqref{eq:8} that
$$
\begin{aligned}
P\left( k\text{+}1 \right)=&\frac{{{{\tilde{K}}}_{3}}\left( k \right)}{k+1}P\left( k \right)-\frac{{{{\tilde{K}}}_{3}}\left( k \right)}{\left( k+1 \right)!}{{P}_{0}}\left( 0 \right)\prod\limits_{i=0}^{k-1}{\left( i+{{{\tilde{K}}}_{1}}\left( i \right)+{{{\tilde{K}}}_{2}}\left( i \right) \right)} \\
&+\frac{{{{\tilde{K}}}_{3}}\left( k \right)}{\left( k+1 \right)!}{{{\tilde{K}}}_{2}}\left( 0 \right)P\left( 0 \right)\prod\limits_{i=1}^{k-1}{\left( i+{{{\tilde{K}}}_{1}}\left( i \right)+{{{\tilde{K}}}_{2}}\left( i \right) \right)}\\
&+\frac{{{{\tilde{K}}}_{3}}\left( k \right)}{k+1}\sum\limits_{i=1}^{k-1}{\frac{{{{\tilde{K}}}_{2}}\left( i \right)}{i+1}P\left( i \right)\prod\limits_{j=i+1}^{k-1}{\frac{j+{{{\tilde{K}}}_{1}}\left( j \right)\text{+}{{{\tilde{K}}}_{2}}\left( j \right)}{j+1}}}.
\end{aligned}
$$
By the induction hypothesis, we have
$$
\begin{aligned}
P\left( k\text{+}1 \right)=&\frac{{{{\tilde{K}}}_{3}}\left( k \right)}{\left( k+1 \right)!}\left( {{a}_{k}}P\left( 0 \right)-{{b}_{k}}{{P}_{0}}\left( 0 \right) \right)-\frac{{{{\tilde{K}}}_{3}}\left( k \right)}{\left( k+1 \right)!}{{P}_{0}}\left( 0 \right)\prod\limits_{i=0}^{k-1}{\left( i+{{{\tilde{K}}}_{1}}\left( i \right)+{{{\tilde{K}}}_{2}}\left( i \right) \right)}\\
&+\frac{{{{\tilde{K}}}_{3}}\left( k \right)}{\left( k+1 \right)!}{{{\tilde{K}}}_{2}}\left( 0 \right)P\left( 0 \right)\prod\limits_{i=1}^{k-1}{\left( i+{{{\tilde{K}}}_{1}}\left( i \right)\text{+}{{{\tilde{K}}}_{2}}\left( i \right) \right)} \\ &+\frac{{{{\tilde{K}}}_{3}}\left( k \right)}{\left( k+1 \right)!}\sum\limits_{i=1}^{k-1}{{{{\tilde{K}}}_{2}}\left( i \right)\left( {{a}_{i}}P\left( 0 \right)-{{b}_{i}}{{P}_{0}}\left( 0 \right) \right)\prod\limits_{j=i+1}^{k-1}{\left( j+{{{\tilde{K}}}_{1}}\left( j \right)\text{+}{{{\tilde{K}}}_{2}}\left( j \right) \right)}}.
\end{aligned}
$$
Merging the terms for $P\left( 0 \right)$ and ${{P}_{0}}\left( 0 \right)$, we have
$$
P\left( k\text{+}1 \right)=\frac{1}{\left( k+1 \right)!}\left( {{a}_{k+1}}P\left( 0 \right)-{{b}_{k+1}}{{P}_{0}}\left( 0 \right) \right),
$$
where
$$
\begin{aligned}
{{a}_{k+1}}=&{{{\tilde{K}}}_{3}}\left( k \right){{a}_{k}}+{{{\tilde{K}}}_{3}}\left( k \right)\sum\limits_{i=1}^{k-1}{{{{\tilde{K}}}_{2}}\left( i \right){{a}_{i}}\prod\limits_{j=i+1}^{k-1}{\left( j+{{{\tilde{K}}}_{1}}\left( j \right)\text{+}{{{\tilde{K}}}_{2}}\left( j \right) \right)}}\\
&+{{{\tilde{K}}}_{3}}\left( k \right){{{\tilde{K}}}_{2}}\left( 0 \right)\prod\limits_{i=1}^{k-1}{\left( i+{{{\tilde{K}}}_{1}}\left( i \right)\text{+}{{{\tilde{K}}}_{2}}\left( i \right) \right)},
\end{aligned}
$$
$$
\begin{aligned}
 {{b}_{k+1}}=&{{{\tilde{K}}}_{3}}\left( k \right){{b}_{k}}+{{{\tilde{K}}}_{3}}\left( k \right)\sum\limits_{i=1}^{k-1}{{{{\tilde{K}}}_{2}}\left( i \right){{b}_{i}}\prod\limits_{j=i+1}^{k-1}{\left( j+{{{\tilde{K}}}_{1}}\left( j \right)\text{+}{{{\tilde{K}}}_{2}}\left( j \right) \right)}}\\
 &+{{{\tilde{K}}}_{3}}\left( k \right)\prod\limits_{i=0}^{k-1}{\left( i+{{{\tilde{K}}}_{1}}\left( i \right)+{{{\tilde{K}}}_{2}}\left( i \right) \right)}.
 \end{aligned}
$$
This implies that \eqref{eq:9} with \eqref{eq:9a} and \eqref{eq:9b} holds for $n=k+1$. According to the mathematical induction, \eqref{eq:9} with \eqref{eq:9a} and \eqref{eq:9b} holds for all $n\ge 1$. Lemma \ref{lm:2} is thus proven.
\end{proof}

Lemma \ref{lm:2} indicates that all ${{a}_{n}}$ and ${{b}_{n}}$ can iteratively be calculated. Therefore, this lemma actually provides a method for calculating the stationary probability distribution in an ON-OFF model of gene expression with general feedback regulations. Note that both ${{a}_{n}}$ and ${{b}_{n}}$ are positive for all $n$, and are monotonically increasing functions of $n$.

Substituting \eqref{eq:9} with \eqref{eq:9a} and \eqref{eq:9b} into \eqref{eq:6}, we have
$$
\begin{aligned}
 {{P}_{0}}\left( n \right)=&\frac{1}{n!}{{P}_{0}}\left( 0 \right)\prod\limits_{i=0}^{n-1}{\left( i+{{{\tilde{K}}}_{1}}\left( i \right)+{{{\tilde{K}}}_{2}}\left( i \right) \right)}
 -\frac{{{{\tilde{K}}}_{2}}\left( 0 \right)P\left( 0 \right)}{n!}\prod\limits_{i=1}^{n-1}{\left( i+{{{\tilde{K}}}_{1}}\left( i \right)+{{{\tilde{K}}}_{2}}\left( i \right) \right)}\\
&-\sum\limits_{i=1}^{n-1}{\frac{{{{\tilde{K}}}_{2}}\left( i \right)}{i+1}P\left( i \right)\prod\limits_{j=i+1}^{n-1}{\frac{j+{{{\tilde{K}}}_{1}}\left( j \right)+{{{\tilde{K}}}_{2}}\left( j \right)}{j+1}}}.
 \end{aligned}
$$
Using $P\left( i \right)=\left[ {{a}_{i}}P\left( 0 \right)-{{b}_{i}}{{P}_{0}}\left( i \right) \right]/{i!}$, we further have
$$
\begin{aligned}
   {{P}_{0}}\left( n \right)=&\frac{1}{n!}{{P}_{0}}\left( 0 \right)\prod\limits_{i=0}^{n-1}{\left( i+{{{\tilde{K}}}_{1}}\left( i \right)+{{{\tilde{K}}}_{2}}\left( i \right) \right)}-\frac{{{{\tilde{K}}}_{2}}\left( 0 \right)P\left( 0 \right)}{n!}\prod\limits_{i=1}^{n-1}{\left( i+{{{\tilde{K}}}_{1}}\left( i \right)+{{{\tilde{K}}}_{2}}\left( i \right) \right)}\\
&-\sum\limits_{i=1}^{n-1}{\frac{{{{\tilde{K}}}_{2}}\left( i \right)}{i+1}\frac{{{a}_{i}}P\left( 0 \right)-{{b}_{i}}{{P}_{0}}\left( i \right)}{i!}\prod\limits_{j=i+1}^{n-1}{\frac{j+{{{\tilde{K}}}_{1}}\left( j \right)+{{{\tilde{K}}}_{2}}\left( j \right)}{j+1}}}\\
=&-\frac{1}{n!}\left[ \sum\limits_{i=1}^{n-1}{{{{\tilde{K}}}_{2}}\left( i \right){{a}_{i}}\prod\limits_{j=i+1}^{n-1}{\left( j+{{{\tilde{K}}}_{1}}\left( j \right)+{{{\tilde{K}}}_{2}}\left( j \right) \right)}}+{{{\tilde{K}}}_{2}}\left( 0 \right)\prod\limits_{i=1}^{n1}{\left( i+{{{\tilde{K}}}_{1}}\left( i \right)+{{{\tilde{K}}}_{2}}\left( i \right) \right)} \right]P\left( 0 \right)\\
&+\frac{1}{n!}\left[ \sum\limits_{i=1}^{n-1}{{{{\tilde{K}}}_{2}}\left( i \right){{b}_{i}}\prod\limits_{j=i+1}^{n-1}{\left( j+{{{\tilde{K}}}_{1}}\left( j \right)+{{{\tilde{K}}}_{2}}\left( j \right) \right)}}+\prod\limits_{i=0}^{n-1}{\left( i+{{{\tilde{K}}}_{1}}\left( i \right)+{{{\tilde{K}}}_{2}}\left( i \right) \right)} \right]{{P}_{0}}\left( 0 \right).
 \end{aligned}
$$
Therefore, ${{P}_{0}}\left( n \right)$ can be formally expressed as
\begin{equation}\label{eq:10}
 {{P}_{0}}\left( n \right)=\frac{1}{n!}\left[ -{{c}_{n}}P\left( 0 \right)+{{d}_{n}}{{P}_{0}}\left( 0 \right) \right],
\end{equation}
where $n\ge 1$, and ${{c}_{n}}$ and ${{d}_{n}}$ are calculated according to \eqref{eq:4d} and \eqref{eq:4e} respectively. Because of ${{P}_{1}}\left( n \right)=P\left( n \right)-{{P}_{0}}\left( n \right)$, $ {{P}_{1}}\left( n \right)$ can be formally expressed as
\begin{equation}\label{eq:11}
{{P}_{1}}\left( n \right)=\frac{1}{n!}\left[ \left( {{a}_{n}}+{{c}_{n}} \right)P\left( 0 \right)-\left( {{b}_{n}}+{{d}_{n}} \right){{P}_{0}}\left( 0 \right) \right],
\end{equation}
where $n=1,2,\cdots $. In the Appendix, we have simplified \eqref{eq:9a} and \eqref{eq:9b} as \eqref{eq:4a} and \eqref{eq:4b}, respectively.

Now, we only need to determine $P\left( 0 \right)$ and ${{P}_{0}}\left( 0 \right)$. First, since we have assumed that the stationary protein distribution exists, this implies that the series $\sum\nolimits_{n=1}^{\infty }{{\left[ {{a}_{n}}P\left( 0 \right)-{{b}_{n}}{{P}_{0}}\left( 0 \right) \right]}/{n!}\;}$ converges due to the probability conservative condition given by $\sum\nolimits_{n=1}^{\infty }{P\left( n \right)}=1$. Besides, both series $\sum\nolimits_{n=1}^{\infty }{\left[ {{d}_{n}}{{P}_{0}}\left( 0 \right)-{{c}_{n}}P\left( 0 \right) \right]/{n!}\;}$ and series $\sum\nolimits_{n=1}^{\infty }{{\left[ \left( {{a}_{n}}+{{c}_{n}} \right)P\left( 0 \right)-\left( {{b}_{n}}+{{d}_{n}} \right){{P}_{0}}\left( 0 \right) \right]}/{n!}\;}$ are also convergent due to $P\left( n \right)={{P}_{0}}\left( n \right)+{{P}_{1}}\left( n \right)$. However, we point out that the single series, $\left\{ {{{a}_{n}}}/{n!}\; \right\}$ or $\left\{ {{{b}_{n}}}/{n!}\; \right\}$ would be divergent. For this, consider special cases: ${{K}_{1}}\left( n \right)=\alpha +f\frac{{{\left( {n}/{{{D}_{1}}}\; \right)}^{{{H}_{1}}}}}{1+{{\left( {n}/{{{D}_{1}}}\; \right)}^{{{H}_{1}}}}}$, ${{K}_{2}}\left( n \right)=\beta +g\frac{{{\left( {n}/{{{D}_{2}}}\; \right)}^{{{H}_{2}}}}}{1+{{\left( {n}/{{{D}_{2}}}\; \right)}^{{{H}_{2}}}}}$, and ${{K}_{3}}\left( n \right)=\mu +\xi \frac{{{\left( {n}/{{{D}_{2}}}\; \right)}^{{{H}_{2}}}}}{1+{{\left( {n}/{{{D}_{2}}}\; \right)}^{{{H}_{2}}}}}$, where $\alpha $ and $\beta $ represent the basal transition rates between ON and OFF states, $f$, $g$ and $\xi $ represent feedback strengths, and ${{D}_{i}}$ are disassociation coefficents for biochemical reactions associated with feedback regulations. Numerical results are demonstrated in Figure~\ref{f1}. Specifically, if ${{\tilde{K}}_{1}}\left( n \right)=\tilde{\alpha }$, ${{\tilde{K}}_{2}}\left( n \right)=\tilde{\beta }$, and ${{\tilde{K}}_{3}}\left( n \right)=\tilde{\mu }$, the two series are all divergent if $\tilde{\alpha }+\tilde{\beta }>2$, converge to a positive number if $\tilde{\alpha }+\tilde{\beta }=2$, and converge to zero if $\tilde{\alpha }+\tilde{\beta }<2$, referring to Figure~\ref{f1}(A,B). If ${{\tilde{K}}_{1}}\left( n \right)=\tilde{\alpha }+\tilde{f}\frac{{{n}^{h}}}{{{D}^{h}}+{{n}^{h}}}$, ${{\tilde{K}}_{2}}\left( n \right)=\tilde{\beta }+\tilde{g}\frac{{{n}^{h}}}{{{D}^{h}}+{{n}^{h}}}$£¬${{\tilde{K}}_{3}}\left( n \right)=\tilde{\mu }$, they are divergent if $\tilde{\alpha }+\tilde{f}+\tilde{\beta }+\tilde{g}>2$, converge to a positive number if $\tilde{\alpha }+\tilde{f}+\tilde{\beta }+\tilde{g}=2$, and converge to zero if $\tilde{\alpha }+\tilde{f}+\tilde{\beta }+\tilde{g}<2$, referring to Figure~\ref{f1}(C,D). On the convergence of $\left\{ {{{a}_{n}}}/{n!}\; \right\}$ or $\left\{ {{{b}_{n}}}/{n!}\; \right\}$, see discussions in Appendix.

\begin{figure}[h!]
\centering
\subfloat{\includegraphics[scale=0.9]{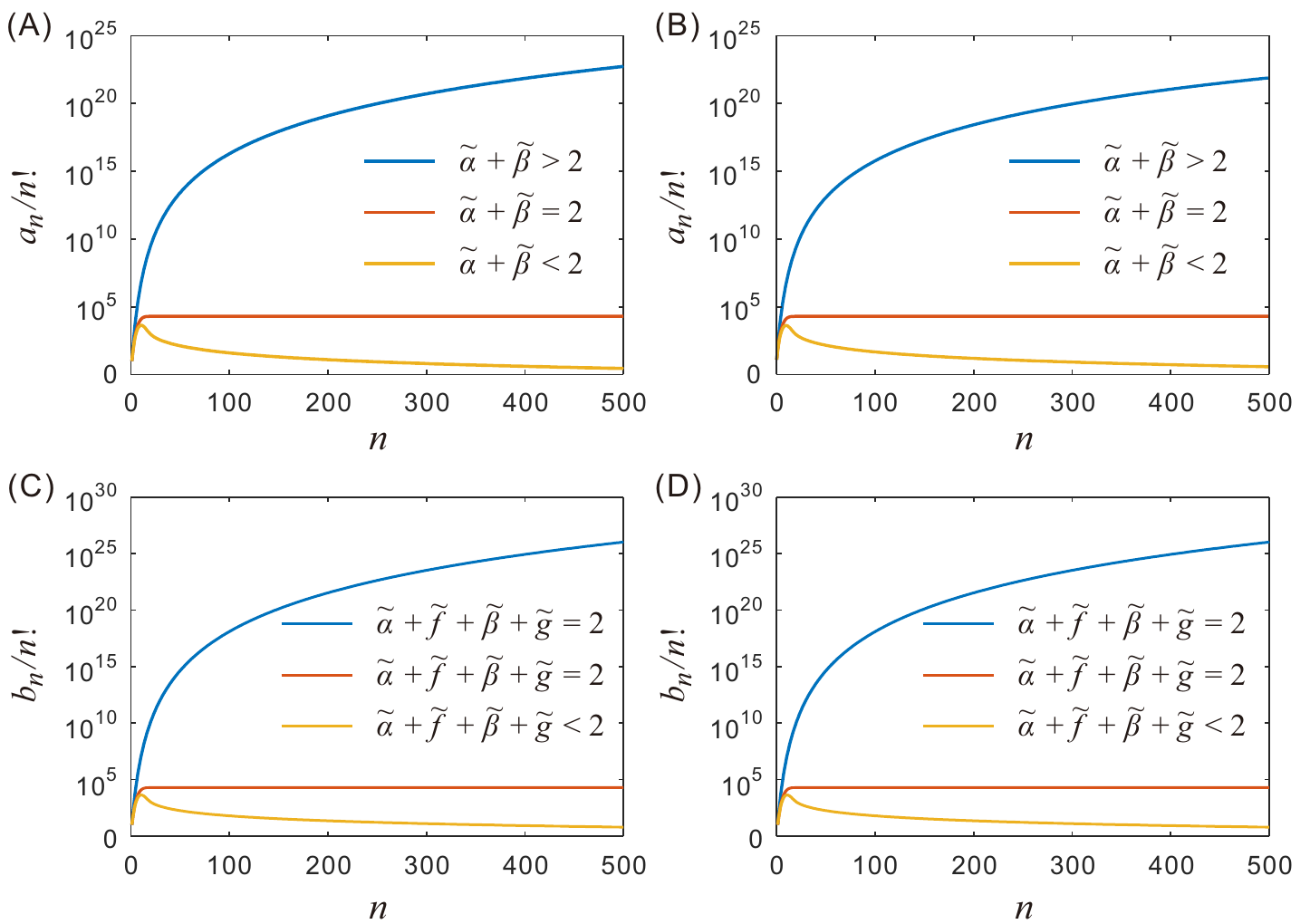}}
\caption{Convergence of series $\left\{ {{{a}_{n}}}/{n!}\; \right\}$ and $\left\{ {{{b}_{n}}}/{n!}\; \right\}$. (A,B) ${{\tilde{K}}_{1}}\left( n \right)=\tilde{\alpha }$, ${{\tilde{K}}_{2}}\left( n \right)=\tilde{\beta }$, and ${{\tilde{K}}_{3}}\left( n \right)=\tilde{\mu }$. We set $\tilde{\alpha }=1$, $\tilde{\beta }=10$, $\tilde{\mu }=\text{1}0$ for $\tilde{\alpha }+\tilde{\beta }>2$; $\tilde{\alpha }=1$, $\tilde{\beta }=1$, $\tilde{\mu }=\text{1}0$ for $\tilde{\alpha }+\tilde{\beta }=2$; and $\tilde{\alpha }=\text{0}.1$, $\tilde{\beta }=\text{0}\text{.5}$, $\tilde{\mu }=\text{1}0$ for $\tilde{\alpha }+\tilde{\beta }<2$. (C,D) ${{\tilde{K}}_{1}}\left( n \right)=\tilde{\alpha }+\tilde{f}\frac{{{n}^{h}}}{{{D}^{h}}+{{n}^{h}}}$, ${{\tilde{K}}_{2}}\left( n \right)=\tilde{\beta }+\tilde{g}\frac{{{n}^{h}}}{{{D}^{h}}+{{n}^{h}}}$£¬${{\tilde{K}}_{3}}\left( n \right)=\tilde{\mu }$. We set $\tilde{\alpha }=1$, $\tilde{\beta }=10$, $\tilde{\mu }=\text{1}0$, $\tilde{f}=1.2$, $\tilde{g}=1$, $D=\sqrt{10}$, $h=2$ for $\tilde{\alpha }+\tilde{f}+\tilde{\beta }+\tilde{g}>2$; $\tilde{\alpha }=\text{0}\text{.5}$, $\tilde{\beta }=\text{0}\text{.5}$, $\tilde{\mu }=\text{1}0$, $\tilde{f}=\text{0}\text{.5}$, $\tilde{g}=\text{0}\text{.5}$, $D=\sqrt{10}$, $h=2$ for $\tilde{\alpha }+\tilde{f}+\tilde{\beta }+\tilde{g}=2$; and $\tilde{\alpha }=\text{0}.1$, $\tilde{\beta }=\text{0}.1$, $\tilde{\mu }=\text{1}0$, $\tilde{f}=\text{0}.2$, $\tilde{g}=\text{0}.2$, $D=\sqrt{10}$, $h=2$ for $\tilde{\alpha }+\tilde{f}+\tilde{\beta }+\tilde{g}<2$.}
\label{f1}
\end{figure}

Next, summing up the first equation of \eqref{eq:3} over $n$ from 0 to $N$ yields
$$
-{{\tilde{K}}_{1}}\left( 0 \right){{P}_{0}}\left( 0 \right)+{{\tilde{K}}_{2}}\left( 0 \right)\left[ P\left( 0 \right)-{{P}_{0}}\left( 0 \right) \right]+\sum\limits_{n=1}^{N}{\left[ {{{\tilde{K}}}_{2}}\left( n \right){{P}_{1}}\left( n \right)-{{{\tilde{K}}}_{1}}\left( n \right){{P}_{0}}\left( n \right) \right]}-{{\tilde{K}}_{3}}\left( N \right){{P}_{0}}\left( N \right)=0,
$$
which holds for any positive integer $N$. Using the formal expressions of ${{P}_{0}}\left( n \right)$ and ${{P}_{1}}\left( n \right)$ given by \eqref{eq:10} and \eqref{eq:11} above, we thus have the following relationship for all $N\ge 1$
$$
\begin{aligned}
&P\left( 0 \right)\left\{ {{{\tilde{K}}}_{2}}\left( 0 \right)+\sum\limits_{n=1}^{N}{\frac{1}{n!}\left[ {{{\tilde{K}}}_{2}}\left( n \right)\left( {{a}_{n}}+{{c}_{n}} \right)+{{{\tilde{K}}}_{1}}\left( n \right){{c}_{n}} \right]} \right\}\\
 &-{{P}_{0}}\left( 0 \right)\left\{ {{{\tilde{K}}}_{1}}\left( 0 \right)+{{{\tilde{K}}}_{2}}\left( 0 \right)+\sum\limits_{n=1}^{N}{\frac{1}{n!}\left[ {{{\tilde{K}}}_{2}}\left( n \right)\left( {{b}_{n}}+{{d}_{n}} \right)+{{{\tilde{K}}}_{1}}\left( n \right){{d}_{n}} \right]} \right\}-{{{\tilde{K}}}_{3}}\left( N \right){{P}_{0}}\left( N \right)=0.
 \end{aligned}
$$
Assume $\underset{N\to \infty }{\mathop{\lim }}\,{{\tilde{K}}_{3}}\left( N \right){{P}_{0}}\left( N \right)=0$. Note that two positive series $\sum\nolimits_{n=1}^{\infty }{{\left[ \left( {{a}_{n}}+{{c}_{n}} \right){{{\tilde{K}}}_{2}}\left( n \right)+{{c}_{n}}{{{\tilde{K}}}_{1}}\left( n \right) \right]}/{n!}\;}$ and $\sum\nolimits_{n=1}^{\infty }{{\left[ \left( {{b}_{n}}+{{d}_{n}} \right){{{\tilde{K}}}_{2}}\left( n \right)+{{d}_{n}}{{{\tilde{K}}}_{1}}\left( n \right) \right]}/{n!}\;}$ are simultaneously convergent or divergent since $P\left( 0 \right)$, ${{P}_{0}}\left( 0 \right)$, and ${{\tilde{K}}_{i}}\left( 0 \right)$ are all finite. If they are convergent, then both $\sum\nolimits_{n=1}^{\infty }{{{{a}_{n}}}/{n!}\;}$ and $\sum\nolimits_{n=1}^{\infty }{{{{c}_{n}}}/{n!}\;}$ are also convergent due to ${{\tilde{\alpha }}_{i}}\le {{\tilde{K}}_{i}}\left( n \right)<{{\tilde{\alpha }}_{i}}+{{\tilde{f}}_{i}}$. Therefore,
\begin{equation}\label{eq:12}
{{P}_{0}}\left( 0 \right)=CP\left( 0 \right),
\end{equation}
where $C$ is given by \eqref{eq:4c}. If they are divergent, then $C$ can still be given via \eqref{eq:4c} (i.e., by summing up the first finite terms in the series). In combination with the probability conservative condition,
$$
1=\sum\limits_{n=0}^{\infty }{P\left( n \right)}=P\left( 0 \right)+\sum\limits_{n=1}^{\infty }{\frac{1}{n!}\left[ {{a}_{n}}P\left( 0 \right)-{{b}_{n}}{{P}_{0}}\left( 0 \right) \right]}=P\left( 0 \right)+P\left( 0 \right)\sum\limits_{n=1}^{\infty }{\frac{1}{n!}\left( {{a}_{n}}-C{{b}_{n}} \right)}.
$$
We can thus determine $P\left( 0 \right)$ and ${{P}_{0}}\left( 0 \right)$, which are given formally by
\begin{equation}\label{eq:13}
P\left( 0 \right)=\underset{N\to \infty }{\mathop{\lim }}\,\frac{1}{1+\sum\nolimits_{i=1}^{N}{{\left( {{a}_{i}}-C{{b}_{i}} \right)}/{i!}\;}}, \quad {{P}_{0}}\left( 0 \right)=\underset{N\to \infty }{\mathop{\lim }}\,\frac{C}{1+\sum\nolimits_{i=1}^{N}{{\left( {{a}_{i}}-C{{b}_{i}} \right)}/{i!}\;}}.
\end{equation}
To that end, the stationary protein distribution can indeed be expressed by \eqref{eq:4}, which is one main result of this paper, where ${{a}_{n}}$ and ${{b}_{n}}$ are determined by \eqref{eq:4a} and \eqref{eq:4b}, and $C$ is given by \eqref{eq:4c}.

In applications, we do not need to calculate ${{a}_{n}}$ and ${{b}_{n}}$ separately. In fact, if we set ${{y}_{n}}={{a}_{n}}P\left( 0 \right)-{{b}_{n}}{{P}_{0}}\left( 0 \right)$ with $n=1,2,\cdots $, then it follows from \eqref{eq:4a} and \eqref{eq:4b} that
\begin{equation}\label{eq:14}
\begin{aligned}
{{y}_{n+1}}=&\frac{{{{\tilde{K}}}_{3}}\left( n \right)\left( n-1+{{{\tilde{K}}}_{1}}\left( n-1 \right)+{{{\tilde{K}}}_{2}}\left( n-1 \right)+{{{\tilde{K}}}_{3}}\left( n-1 \right) \right)}{{{{\tilde{K}}}_{3}}\left( n-1 \right)}{{y}_{n}}\\
&-{{\tilde{K}}_{3}}\left( n \right)\left( n-1+{{{\tilde{K}}}_{1}}\left( n-1 \right) \right){{y}_{n-1}},
\end{aligned}
\end{equation}
where $n=2,3,\cdots $, ${{y}_{0}}=0$ and ${{y}_{1}}={{\tilde{K}}_{3}}\left( 0 \right)\left[ P\left( 0 \right)-{{P}_{0}}\left( 0 \right) \right]$. Note that \eqref{eq:14} is still an iterative system, so ${{y}_{n}}$ can easily be obtained. Also note that ${{y}_{n}}>0$ for all positive integers, $n$.

In a word, the above analysis process gives a framework for calculating stationary protein distributions in an ON-OFF model of gene expression with arbitrary feedback regulations (i.e., ${{K}_{i}}\left( n \right)$ with $i=1,2,3$ are any functions of $n$).

Here we list main steps for calculating the stationary protein distribution:

Step-0. Input parameter values and $N$ (a large positive integer, e.g., $N=200$), and calculate ${{a}_{1}}={{b}_{1}}={{\tilde{K}}_{3}}\left( 0 \right)$, ${{\tilde{K}}_{1}}\left( 0 \right)$ and ${{\tilde{K}}_{2}}\left( 0 \right)$;

Step-1. Set $n=1$;

Step-2. Calculate ${{\tilde{K}}_{i}}\left( n \right)$ ($1\le i\le 3$), ${{a}_{n}}$ and ${{b}_{n}}$ according to \eqref{eq:4a} and \eqref{eq:4b}, as well as ${{c}_{n}}$ and ${{d}_{n}}$ according to \eqref{eq:4d} and \eqref{eq:4e};

Step-3. Update $n+1\to n$. If $n\le N$, then go to Step-2, and turn to the next step (i.e., Step-4) elsewhere;

Step-4. Calculate $C$ according to \eqref{eq:4c}, and $P\left( n \right)$ according to \eqref{eq:4}, where $n=0,1,2,\cdots ,N$;

Step-5. Output $P\left( n \right)$.

\section{Analytical protein distributions in special cases}
In this section, we will reproduce known distributions in three special cases. First, consider the case of ${{K}_{1}}\left( n \right)=\alpha $, ${{K}_{2}}\left( n \right)=\beta $ and ${{K}_{3}}\left( n \right)=\mu $ for all $n$, where $\alpha $, $\beta $ and $\mu $ are positive constants. In this case, the corresponding gene model reduces to the common On-OFF model. For convenience, we denote $\tilde{\alpha }={\alpha }/{\delta }$, $\tilde{\beta }={\beta }/{\delta }$, $\tilde{\mu }={\mu }/{\delta }$. Then, ${{\tilde{K}}_{1}}\left( n \right)=\tilde{\alpha }$, ${{\tilde{K}}_{2}}\left( n \right)=\tilde{\beta }$, ${{\tilde{K}}_{3}}\left( n \right)=\tilde{\mu }$, where $n=0,1,2,\cdots $. Moreover, \eqref{eq:9a} reduces
\begin{equation}\label{eq:15}
{{a}_{n}}=\tilde{\mu }{{a}_{n-1}}+\tilde{\mu }\tilde{\beta }\sum\limits_{i=1}^{n-2}{{{a}_{i}}\prod\limits_{j=i+1}^{n-2}{\left( j+\tilde{\alpha }+\tilde{\beta } \right)}}+\tilde{\mu }\tilde{\beta }\prod\limits_{i=1}^{n-2}{\left( i+\tilde{\alpha }+\tilde{\beta } \right)},
\end{equation}
where ${{a}_{1}}=\tilde{\mu }$, $n=2,3,\cdots $. From \eqref{eq:15}, we can obtain the expressions of all ${{a}_{n}}$, e.g., the initial several ${{a}_{n}}$ are
$$
{{a}_{1}}=\tilde{\mu },\quad {{a}_{2}}=\tilde{\mu }{{a}_{1}}+\tilde{\mu }\tilde{\beta }={{\tilde{\mu }}^{2}}+\tilde{\mu }\tilde{\beta },
$$
$$
{{a}_{3}}={{\tilde{\mu }}^{3}}+2{{\tilde{\mu }}^{2}}\tilde{\beta }+\tilde{\mu }\tilde{\beta }\left( 1+\tilde{\alpha }+\tilde{\beta } \right),
$$
$$
{{a}_{4}}={{\tilde{\mu }}^{4}}+3{{\tilde{\mu }}^{3}}\tilde{\beta }+{{\tilde{\mu }}^{2}}\tilde{\beta }\left( 3+2\tilde{\alpha }+3\tilde{\beta } \right)+\tilde{\mu }\tilde{\beta }\prod\limits_{i=1}^{2}{\left( i+\tilde{\alpha }+\tilde{\beta } \right)},
$$
$$
{{a}_{5}}={{\tilde{\mu }}^{5}}+4{{\tilde{\mu }}^{4}}\tilde{\beta }+3{{\tilde{\mu }}^{3}}\tilde{\beta }\left( 2+\tilde{\alpha }+2\tilde{\beta } \right)+2{{\tilde{\mu }}^{2}}\tilde{\beta }{{\left( 2+\tilde{\alpha }+\tilde{\beta } \right)}^{2}}+\tilde{\mu }\tilde{\beta }\prod\limits_{i=1}^{3}{\left( i+\tilde{\alpha }+\tilde{\beta } \right)}.
$$
Similarly, we can give the expressions of initial several ${{b}_{n}}$, ${{c}_{n}}$ and ${{d}_{n}}$ according to \eqref{eq:4d},\eqref{eq:4e} and \eqref{eq:9b}, respectively. Interestingly, we find, by calculation,
$$
1+\sum\limits_{n=1}^{\infty }{\frac{1}{n!}\frac{\tilde{\alpha }{{c}_{n}}+\tilde{\beta }\left( {{a}_{n}}+{{c}_{n}} \right)}{{\tilde{\beta }}}}={}_{1}{{F}_{1}}\left( \tilde{\alpha },\tilde{\alpha }+\tilde{\beta }+1;-\tilde{\mu } \right),
$$
$$
1+\sum\limits_{n=1}^{\infty }{\frac{1}{n!}\frac{\tilde{\alpha }{{d}_{n}}+\tilde{\beta }\left( {{b}_{n}}+{{d}_{n}} \right)}{\tilde{\alpha }+\tilde{\beta }}}={}_{1}{{F}_{1}}\left( \tilde{\alpha },\tilde{\alpha }+\tilde{\beta };-\tilde{\mu } \right),
$$
where ${}_{1}{{F}_{1}}\left( a,b;z \right)=\sum\limits_{n=0}^{\infty }{\frac{{{\left( a \right)}_{n}}}{{{\left( b \right)}_{n}}}\frac{{{z}^{n}}}{n!}}$ is a hypergeometric function and ${{\left( c \right)}_{n}}$ (the Pochhammer symbol) is defined as ${{\left( c \right)}_{n}}={\Gamma \left( n+c \right)}/{\Gamma \left( c \right)}\;$ with $\Gamma \left( \cdot  \right)$ being the common Gamma function. According to \eqref{eq:4c}, we thus obtain
\begin{equation}\label{eq:16}
C=\frac{{\tilde{\beta }}}{\tilde{\alpha }+\tilde{\beta }}\frac{{}_{1}{{F}_{1}}\left( \tilde{\alpha },\tilde{\alpha }+\tilde{\beta }+1;-\tilde{\mu } \right)}{{}_{1}{{F}_{1}}\left( \tilde{\alpha },\tilde{\alpha }+\tilde{\beta };-\tilde{\mu } \right)}.
\end{equation}
Furthermore, according to \eqref{eq:12}, we have
\begin{equation}\label{eq:17}
P\left( 0 \right)={}_{1}{{F}_{1}}\left( \tilde{\alpha },\tilde{\alpha }+\tilde{\beta };-\tilde{\mu } \right),\quad {{P}_{0}}\left( 0 \right)=\frac{{\tilde{\beta }}}{\tilde{\alpha }+\tilde{\beta }}{}_{1}{{F}_{1}}\left( \tilde{\alpha },\tilde{\alpha }+\tilde{\beta }+1;-\tilde{\mu } \right).
\end{equation}
Note that \eqref{eq:14} reduces to
\begin{equation}\label{eq:18}
{{y}_{n}}=\tilde{\mu }{{y}_{n-1}}+\tilde{\mu }\tilde{\beta }\sum\limits_{i=1}^{n-2}{{{y}_{i}}\prod\limits_{j=i\text{+}1}^{n-2}{\left( j+\tilde{\alpha }+\tilde{\beta } \right)}}+\tilde{\mu }{{\left( \tilde{\alpha }+\tilde{\beta } \right)}_{n-1}}\tilde{\gamma },
\end{equation}
where $\tilde{\gamma }=\frac{{\tilde{\beta }}}{\tilde{\alpha }+\tilde{\beta }}P\left( 0 \right)-{{P}_{0}}\left( 0 \right)$, ${{y}_{1}}=\tilde{\mu }\left[ P\left( 0 \right)-{{P}_{0}}\left( 0 \right) \right]$, and $n=2,3,\cdots $. By tedious calculations, we find
\begin{equation}\label{eq:19}
{{y}_{n}}={{\tilde{\mu }}^{n}}\frac{{{\left( {\tilde{\alpha }} \right)}_{n}}}{{{\left( \tilde{\alpha }+\tilde{\beta } \right)}_{n}}}{}_{1}{{F}_{1}}\left( \tilde{\alpha }+n,\tilde{\alpha }+\tilde{\beta }+n;-\tilde{\mu } \right).
\end{equation}
Therefore, the stationary protein distribution is given by
\begin{equation}\label{eq:20}
P\left( n \right)=\frac{{{{\tilde{\mu }}}^{n}}}{n!}\frac{{{\left( {\tilde{\alpha }} \right)}_{n}}}{{{\left( \tilde{\alpha }+\tilde{\beta } \right)}_{n}}}{}_{1}{{F}_{1}}\left( \tilde{\alpha }+n,\tilde{\alpha }+\tilde{\beta }+n;-\tilde{\mu } \right).
\end{equation}
The similar stationary distribution was also derived for the common ON-OFF model of gene expression at the transcription level \cite{s29,s42,s43}.

Second, consider the case of ${{K}_{1}}\left( n \right)=\alpha +nf$, ${{K}_{2}}\left( n \right)=\beta $ and ${{K}_{3}}\left( n \right)=\mu $, i.e., consider a gene model with a linear positive feedback, where $f$ represents positive feedback strength. In this case, we can show that the stationary protein distribution is given by
\begin{equation}\label{eq:21}
P\left( n \right)=\frac{P\left( 0 \right)}{\left( n \right)!}{{\left( \frac{{\tilde{\mu }}}{1+\tilde{f}} \right)}^{n}}\frac{{{\left( {{\tilde{\alpha }}}/{\left( 1+\tilde{f} \right)}\; \right)}_{n}}}{{{\left( {\left( \tilde{\alpha }+\tilde{\beta } \right)}/{\left( 1+\tilde{f} \right)}\; \right)}_{n}}}{}_{1}{{F}_{1}}\left( n+\frac{{\tilde{\alpha }}}{1+\tilde{f}},n+\frac{\tilde{\alpha }+\tilde{\beta }}{1+\tilde{f}};-\frac{{\tilde{\mu }}}{1+\tilde{f}} \right),
\end{equation}
with
\begin{equation}\label{eq:21a}
P\left( 0 \right)={{\left[ {}_{1}{{F}_{1}}\left( \frac{{\tilde{\alpha }}}{1+\tilde{f}},\frac{\tilde{\alpha }+\tilde{\beta }}{1+\tilde{f}};\frac{\tilde{f}\tilde{\mu }}{1+\tilde{f}} \right) \right]}^{-1}},
\end{equation}
where $\tilde{\alpha }={\alpha }/{\delta }$, $\tilde{\beta }={\beta }/{\delta }$, $\tilde{\mu }={\mu }/{\delta }$ and $\tilde{f}={f}/{\delta }$. Similarly, if we consider a gene model with a linear negative feedback, i.e., ${{K}_{1}}\left( n \right)=\alpha $, ${{K}_{2}}\left( n \right)=\beta +ng$ and ${{K}_{3}}\left( n \right)=\mu $, where $g$ represents negative feedback strength, then the stationary protein distribution takes the form
\begin{equation}\label{eq:22}
P\left( n \right)=\frac{P\left( 0 \right)}{\left( n \right)!}\frac{{{\left( {\tilde{\alpha }} \right)}_{n}}{{\left[ \frac{{\tilde{\mu }}}{{{\left( 1+\tilde{g} \right)}^{2}}} \right]}^{n}}}{{{\left( \frac{\tilde{\alpha }+\tilde{\beta }}{1+\tilde{g}}\text{+}\frac{\tilde{g}\tilde{\mu }}{{{\left( 1+\tilde{g} \right)}^{2}}} \right)}_{n}}}{}_{1}{{F}_{1}}\left( n+\tilde{\alpha },n+\frac{\tilde{\alpha }+\tilde{\beta }}{1+\tilde{g}}\text{+}\frac{\tilde{g}\tilde{\mu }}{{{\left( 1+\tilde{g} \right)}^{2}}};-\frac{{\tilde{\mu }}}{{{\left( 1+\tilde{g} \right)}^{2}}} \right),
\end{equation}
with
\begin{equation}\label{eq:22a}
P\left( 0 \right)={{\left[ {}_{1}{{F}_{1}}\left( \tilde{\alpha },\frac{\tilde{\alpha }+\tilde{\beta }}{1+\tilde{g}}+\frac{\tilde{\mu }\tilde{g}}{{{\left( 1+\tilde{g} \right)}^{2}}};-\frac{\tilde{\mu }\tilde{g}}{{{\left( 1+\tilde{g} \right)}^{2}}} \right) \right]}^{-1}},
\end{equation}
where $\tilde{\alpha }={\alpha }/{\delta }$, $\tilde{\beta }={\beta }/{\delta }$, $\tilde{\mu }={\mu }/{\delta }$ and $\tilde{g}={g}/{\delta }$. The above two analytical distributions are all known results \cite{s14,s15,s23}. Note that if $\tilde{f}=0$ or $\tilde{g}=0$, then \eqref{eq:21} with \eqref{eq:21a} or \eqref{eq:22} with \eqref{eq:22a} reduces to \eqref{eq:20}.

Regarding the effect of feedback on stationary protein distribution, we plot Figure \ref{f2}, which demonstrates that theoretical results (solid lines) are in accordance with numerical results (empty circles). From this figure, we observe that in the absence of negative feedback regulation (i.e., $\tilde{g}=0$), an appropriate positive feedback strength can induce bimodality, referring to Figure \ref{f2}(A-C). Similarly, in the absence of positive feedback regulation (i.e., $\tilde{f}=0$), an appropriate negative strength can also induce bimodality, referring to Figure \ref{f2}(D-F). In any case, bimodal protein distributions can occur only when two normalized fundamental switching rates $\tilde{\alpha }$ and $\tilde{\beta }$ are small.
\begin{figure}[h!]
\centering
\subfloat{\includegraphics[scale=1]{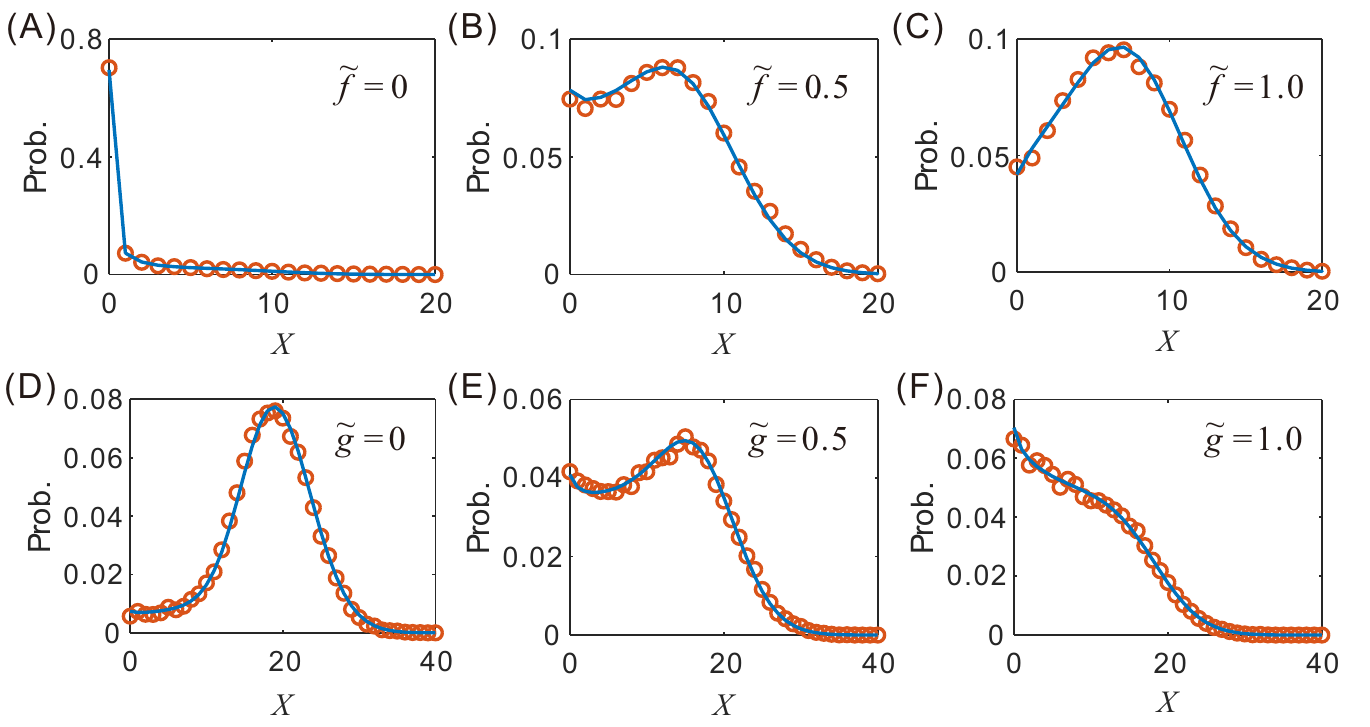}}
\caption{Dependence of steady-state probability distribution on feedback strength, where solid lines correspond to the results obtained by theoretical prediction whereas empty circles to the results obtained by the Gillespie stochastic simulation \cite{s11}. Reaction propensity functions are set as ${{\tilde{K}}_{1}}\left( n \right)=\tilde{\alpha }+\tilde{f}\frac{{{n}^{h}}}{{{D}^{h}}+{{n}^{h}}}$, ${{\tilde{K}}_{2}}\left( n \right)=\tilde{\beta }+\tilde{g}\frac{{{n}^{h}}}{{{D}^{h}}+{{n}^{h}}}$, ${{\tilde{K}}_{3}}\left( n \right)=\tilde{\mu }$. (A-C) The case of $\tilde{g}=0$, where parameter values are set as $\tilde{\alpha }=0.1$, $\tilde{\beta }=\text{0}\text{.6}$, $\tilde{\mu }=\text{10}$, $D=\sqrt{10}$, $h=2$; (D-F) The case of $\tilde{f}=0$, where parameter values are set as $\tilde{\alpha }=0.\text{9}$, $\tilde{\beta }=\text{0}\text{.1}$, $\tilde{\mu }=\text{20}$, $D=\sqrt{10}$, $h=2$.}
\label{f2}
\end{figure}

\section{Conclusion and discussion}
The two-state (or ON-OFF) models have extensively been used in modeling of stochastic gene expression. If feedbacks are not considered or the only linear feedbacks are considered, analytical gene product (mRNA or protein) distributions have been derived. However, the ways of feedback regulation are diverse and the feedbacks are often nonlinear due to the binding of transcription factors to the regulatory sites. If general feedback regulations are characterized by Hill-type functions \cite{s1}, exact analytical distributions of gene products have not been obtained so far. Here, we have developed a general analysis framework to derive the exact protein distribution in a generalized ON-OFF model of stochastic gene expression with arbitrary feedbacks including positive and negative feedbacks as well as posttranscriptional or posttranslational regulation. This technique can easily be extended to modeling and analysis of other similar yet complex biochemical reaction systems.

Although analytical stationary gene product distributions have been derived, sources of stochastic fluctuations in the gene expression levels cannot clearly be seen. In fact, from theses formal distributions, it is difficult to give the explicit decomposition principle for the expression noise. It is also difficult to dissect the contributions of the fractional noisy sources (e.g., the promoter noise, and the noise originating from feedback regulation) to the resulting total noise as done in \cite{s14,s23}. More work or further analysis is needed. In addition, the questions such as how new biological knowledge is discovered from the formal distributions and how design principles in biology are concluded from the formal distributions are worth further investigation.

%%%%%%%%%%%%%%%%%%%%%%%%%%%%%%%%%%%%%%%%%%%%%%%%%%%%%%%%%%%%%%%%%%%%%%

%%%%%%%%%%%%%%%%%%%%%%%%%%%%%%%%%%%%%%%%%%%%%%%%%%%%%%%
%%% Acknowledgements. ÖÂÐ»
%%%%%%%%%%%%%%%%%%%%%%%%%%%%%%%%%%%%%%%%%%%%%%%%%%%%%%%
\section*{Acknowledgements}
This work was supported by grants 11931019, 11775314 and 91530320 from National Natural Science Foundation of China.

%%%%%%%%%%%%%%%%%%%%%%%%%%%%%%%%%%%%%%%%%%%%%%%%%%%%%%%
%%% Appendix sections. ¸½Â¼ÕÂ½Ú, ·Ç±ØÑ¡
%%%%%%%%%%%%%%%%%%%%%%%%%%%%%%%%%%%%%%%%%%%%%%%%%%%%%%%
\section*{Appendix: On the convergence of the series}

%\section{\label{sec:compint}}

Here we give a simple discussion on the convergence of the series involved in the main text.

First, we simplify \eqref{eq:9a} and \eqref{eq:9b} in the main text. Note that ${{a}_{1}}={{b}_{1}}={{\tilde{K}}_{3}}\left( 0 \right)$. Then, ${{a}_{n}}$ and ${{b}_{n}}$ with $n\ge 2$ can be determined according to the following iterative relationships
\begin{equation}\label{eq:A1a}
\begin{aligned}
   {{a}_{n}}&={{{\tilde{K}}}_{3}}\left( n-1 \right){{a}_{n-1}}+{{{\tilde{K}}}_{3}}\left( n-1 \right)\sum\limits_{i=1}^{n-2}{{{{\tilde{K}}}_{2}}\left( i \right){{a}_{i}}\prod\limits_{j=i+1}^{n-2}{\left( j+{{{\tilde{K}}}_{1}}\left( j \right)+{{{\tilde{K}}}_{2}}\left( j \right) \right)}}\\
  &+{{{\tilde{K}}}_{3}}\left( n-1 \right){{{\tilde{K}}}_{2}}\left( 0 \right)\prod\limits_{i=1}^{n-2}{\left( i+{{{\tilde{K}}}_{1}}\left( i \right)+{{{\tilde{K}}}_{2}}\left( i \right) \right)},
\end{aligned}
\end{equation}
\begin{equation}\label{eq:A1b}
\begin{aligned}
 {{b}_{n}}&={{{\tilde{K}}}_{3}}\left( n-1 \right){{b}_{n-1}}+{{{\tilde{K}}}_{3}}\left( n-1 \right)\sum\limits_{i=1}^{n-2}{{{{\tilde{K}}}_{2}}\left( i \right){{b}_{i}}\prod\limits_{j=i+1}^{n-2}{\left( j+{{{\tilde{K}}}_{1}}\left( j \right)+{{{\tilde{K}}}_{2}}\left( j \right) \right)}} \\
  &+{{{\tilde{K}}}_{3}}\left( n-1 \right)\prod\limits_{i=0}^{n-2}{\left( i+{{{\tilde{K}}}_{1}}\left( i \right)+{{{\tilde{K}}}_{2}}\left( i \right) \right)}.
\end{aligned}
\end{equation}
For $n\ge 3$, we have
$$
\begin{aligned}
&\frac{{{{\tilde{K}}}_{2}}\left( n \right){{a}_{n}}}{\prod\limits_{i=1}^{n}{\left( i+{{{\tilde{K}}}_{1}}\left( i \right)+{{{\tilde{K}}}_{2}}\left( i \right) \right)}}=\frac{{{{\tilde{K}}}_{3}}\left( n-1 \right){{{\tilde{K}}}_{2}}\left( n \right)}{\left( n+{{{\tilde{K}}}_{1}}\left( n \right)+{{{\tilde{K}}}_{2}}\left( n \right) \right){{{\tilde{K}}}_{2}}\left( n-1 \right)}\frac{{{{\tilde{K}}}_{2}}\left( n-1 \right){{a}_{n-1}}}{\prod\limits_{i=1}^{n-1}{\left( i+{{{\tilde{K}}}_{1}}\left( i \right)+{{{\tilde{K}}}_{2}}\left( i \right) \right)}} \\
&\quad\quad+\frac{{{{\tilde{K}}}_{3}}\left( n-1 \right){{{\tilde{K}}}_{2}}\left( n \right)\prod\limits_{i=1}^{n-2}{\left( i+{{{\tilde{K}}}_{1}}\left( i \right)+{{{\tilde{K}}}_{2}}\left( i \right) \right)}}{\prod\limits_{i=1}^{n}{\left( i+{{{\tilde{K}}}_{1}}\left( i \right)+{{{\tilde{K}}}_{2}}\left( i \right) \right)}}\sum\limits_{i=1}^{n-2}{\frac{{{{\tilde{K}}}_{2}}\left( i \right){{a}_{i}}}{\prod\limits_{j=1}^{i}{\left( j+{{{\tilde{K}}}_{1}}\left( j \right)+{{{\tilde{K}}}_{2}}\left( j \right) \right)}}} \\
 &\quad\quad+\frac{{{{\tilde{K}}}_{3}}\left( n-1 \right){{{\tilde{K}}}_{2}}\left( 0 \right){{{\tilde{K}}}_{2}}\left( n \right)\prod\limits_{i=1}^{n-2}{\left( i+{{{\tilde{K}}}_{1}}\left( i \right)+{{{\tilde{K}}}_{2}}\left( i \right) \right)}}{\prod\limits_{i=1}^{n}{\left( i+{{{\tilde{K}}}_{1}}\left( i \right)+{{{\tilde{K}}}_{2}}\left( i \right) \right)}},
   \end{aligned}
$$
where ${{a}_{2}}={{\tilde{K}}_{3}}\left( 1 \right)\left( {{{\tilde{K}}}_{3}}\left( 0 \right)+{{{\tilde{K}}}_{2}}\left( 0 \right) \right)$, ${{b}_{2}}={{\tilde{K}}_{3}}\left( 1 \right)\left( {{{\tilde{K}}}_{3}}\left( 0 \right)+{{{\tilde{K}}}_{2}}\left( 0 \right)+{{{\tilde{K}}}_{1}}\left( 0 \right) \right)$. Furthermore,
$$
\begin{aligned}
&\frac{{{{\tilde{K}}}_{2}}\left( n \right){{a}_{n}}}{\prod\limits_{i=1}^{n}{\left( i+{{{\tilde{K}}}_{1}}\left( i \right)+{{{\tilde{K}}}_{2}}\left( i \right) \right)}}=\frac{{{{\tilde{K}}}_{3}}\left( n-1 \right){{{\tilde{K}}}_{2}}\left( n \right)}{\left( n+{{{\tilde{K}}}_{1}}\left( n \right)+{{{\tilde{K}}}_{2}}\left( n \right) \right){{{\tilde{K}}}_{2}}\left( n-1 \right)}\frac{{{{\tilde{K}}}_{2}}\left( n-1 \right){{a}_{n-1}}}{\prod\limits_{i=1}^{n-1}{\left( i+{{{\tilde{K}}}_{1}}\left( i \right)+{{{\tilde{K}}}_{2}}\left( i \right) \right)}} \\
&\quad+\frac{{{{\tilde{K}}}_{3}}\left( n-1 \right){{{\tilde{K}}}_{2}}\left( n \right)}{\left( n+{{{\tilde{K}}}_{1}}\left( n \right)+{{{\tilde{K}}}_{2}}\left( n \right) \right)\left( n-1+{{{\tilde{K}}}_{1}}\left( n-1 \right)+{{{\tilde{K}}}_{2}}\left( n-1 \right) \right)}\sum\limits_{i=1}^{n-2}{\frac{{{{\tilde{K}}}_{2}}\left( i \right){{a}_{i}}}{\prod\limits_{j=1}^{i}{\left( j+{{{\tilde{K}}}_{1}}\left( j \right)+{{{\tilde{K}}}_{2}}\left( j \right) \right)}}} \\
&\quad+\frac{{{{\tilde{K}}}_{3}}\left( n-1 \right){{{\tilde{K}}}_{2}}\left( n \right){{{\tilde{K}}}_{2}}\left( 0 \right)}{\left( n+{{{\tilde{K}}}_{1}}\left( n \right)+{{{\tilde{K}}}_{2}}\left( n \right) \right)\left( n-1+{{{\tilde{K}}}_{1}}\left( n-1 \right)+{{{\tilde{K}}}_{2}}\left( n-1 \right) \right)}. \\
  \end{aligned}
$$
If we denote ${{A}_{n}}=\frac{{{{\tilde{K}}}_{2}}\left( n \right){{a}_{n}}}{\sum\nolimits_{i=1}^{n}{\left( i+{{{\tilde{K}}}_{1}}\left( i \right)+{{{\tilde{K}}}_{2}}\left( i \right) \right)}}$ and ${{S}_{n}}=\sum\limits_{i=1}^{n}{{{A}_{n}}}$, then
$$
\begin{aligned}
{{A}_{n}}=&\frac{{{{\tilde{K}}}_{3}}\left( n-1 \right){{{\tilde{K}}}_{2}}\left( n \right)}{\left( n+{{{\tilde{K}}}_{1}}\left( n \right)+{{{\tilde{K}}}_{2}}\left( n \right) \right){{{\tilde{K}}}_{2}}\left( n-1 \right)}{{A}_{n-1}}\\
&+\frac{{{{\tilde{K}}}_{3}}\left( n-1 \right){{{\tilde{K}}}_{2}}\left( n \right)}{\left( n+{{{\tilde{K}}}_{1}}\left( n \right)+{{{\tilde{K}}}_{2}}\left( n \right) \right)\left( n-1+{{{\tilde{K}}}_{1}}\left( n-1 \right)+{{{\tilde{K}}}_{2}}\left( n-1 \right) \right)}{{S}_{n-2}} \\
&+\frac{{{{\tilde{K}}}_{3}}\left( n-1 \right){{{\tilde{K}}}_{2}}\left( n \right){{{\tilde{K}}}_{2}}\left( 0 \right)}{\left( n+{{{\tilde{K}}}_{1}}\left( n \right)+{{{\tilde{K}}}_{2}}\left( n \right) \right)\left( n-1+{{{\tilde{K}}}_{1}}\left( n-1 \right)+{{{\tilde{K}}}_{2}}\left( n-1 \right) \right)}.
  \end{aligned}
$$
Therefore,
$$
\begin{aligned}
{{S}_{n-2}}=&\frac{\left( n+{{{\tilde{K}}}_{1}}\left( n \right)+{{{\tilde{K}}}_{2}}\left( n \right) \right)\left( n-1+{{{\tilde{K}}}_{1}}\left( n-1 \right)+{{{\tilde{K}}}_{2}}\left( n-1 \right) \right)}{{{{\tilde{K}}}_{3}}\left( n-1 \right){{{\tilde{K}}}_{2}}\left( n \right)}{{A}_{n}}\\
&-\frac{\left( n-1+{{{\tilde{K}}}_{1}}\left( n-1 \right)+{{{\tilde{K}}}_{2}}\left( n-1 \right) \right)}{{{{\tilde{K}}}_{2}}\left( n-1 \right)}{{A}_{n-1}}-{{\tilde{K}}_{2}}\left( 0 \right),
\end{aligned}
$$
and
$$
\begin{aligned}
{{A}_{n+1}}=&\frac{{{{\tilde{K}}}_{3}}\left( n \right){{{\tilde{K}}}_{2}}\left( n+1 \right){{{\tilde{K}}}_{2}}\left( 0 \right)}{\left( n+{{{\tilde{K}}}_{1}}\left( n \right)+{{{\tilde{K}}}_{2}}\left( n \right) \right)\left( n+1+{{{\tilde{K}}}_{1}}\left( n+1 \right)+{{{\tilde{K}}}_{2}}\left( n+1 \right) \right)}\\
&+\frac{{{{\tilde{K}}}_{3}}\left( n \right){{{\tilde{K}}}_{2}}\left( n+1 \right)}{\left( n+1+{{{\tilde{K}}}_{1}}\left( n+1 \right)+{{{\tilde{K}}}_{2}}\left( n+1 \right) \right){{{\tilde{K}}}_{2}}\left( n \right)}{{A}_{n}} \\
&+\frac{{{{\tilde{K}}}_{3}}\left( n \right){{{\tilde{K}}}_{2}}\left( n+1 \right)}{\left( n+{{{\tilde{K}}}_{1}}\left( n \right)+{{{\tilde{K}}}_{2}}\left( n \right) \right)\left( n+1+{{{\tilde{K}}}_{1}}\left( n+1 \right)+{{{\tilde{K}}}_{2}}\left( n+1 \right) \right)}\left( {{A}_{n-1}}+{{S}_{n-2}} \right).
\end{aligned}
$$
Substituting the expression of ${{S}_{n-2}}$ into that of ${{A}_{n+1}}$ yields
$$
\begin{aligned}
{{A}_{n+1}} =& \left[ \frac{{{{\tilde{K}}}_{3}}\left( n \right){{{\tilde{K}}}_{2}}\left( n+1 \right)}{\left( n+1+{{{\tilde{K}}}_{1}}\left( n+1 \right)+{{{\tilde{K}}}_{2}}\left( n+1 \right) \right){{{\tilde{K}}}_{2}}\left( n \right)} +\right.\\
&\phantom{=\;\;}\left.\frac{{{{\tilde{K}}}_{3}}\left( n \right){{{\tilde{K}}}_{2}}\left( n+1 \right)\left( n-1+{{{\tilde{K}}}_{1}}\left( n-1 \right)+{{{\tilde{K}}}_{2}}\left( n-1 \right) \right)}{{{{\tilde{K}}}_{3}}\left( n-1 \right){{{\tilde{K}}}_{2}}\left( n \right)\left( n+1+{{{\tilde{K}}}_{1}}\left( n+1 \right)+{{{\tilde{K}}}_{2}}\left( n+1 \right) \right)} \right]{{A}_{n}} \\
&+ \left\{ \frac{{{{\tilde{K}}}_{3}}\left( n \right){{{\tilde{K}}}_{2}}\left( n+1 \right)}{\left( n+{{{\tilde{K}}}_{1}}\left( n \right)+{{{\tilde{K}}}_{2}}\left( n \right) \right)\left( n+1+{{{\tilde{K}}}_{1}}\left( n+1 \right)+{{{\tilde{K}}}_{2}}\left( n+1 \right) \right)} \right.\\
&\phantom{=\;\;}\left.\times \left[ 1-\frac{\left( n-1+{{{\tilde{K}}}_{1}}\left( n-1 \right)+{{{\tilde{K}}}_{2}}\left( n-1 \right) \right)}{{{{\tilde{K}}}_{2}}\left( n-1 \right)} \right]\right\}{{A}_{n-1}}
\end{aligned}
$$
which can be rewritten as
$$
\begin{aligned}
{{A}_{n+1}}=&\frac{{{{\tilde{K}}}_{3}}\left( n \right){{{\tilde{K}}}_{2}}\left( n+1 \right)\left( n-1+{{{\tilde{K}}}_{1}}\left( n-1 \right)+{{{\tilde{K}}}_{2}}\left( n-1 \right)+{{{\tilde{K}}}_{3}}\left( n-1 \right) \right)}{{{{\tilde{K}}}_{3}}\left( n-1 \right){{{\tilde{K}}}_{2}}\left( n \right)\left( n+1+{{{\tilde{K}}}_{1}}\left( n+1 \right)+{{{\tilde{K}}}_{2}}\left( n+1 \right) \right)}{{A}_{n}} \\
&-\frac{{{{\tilde{K}}}_{3}}\left( n \right){{{\tilde{K}}}_{2}}\left( n+1 \right)\left( n-1+{{{\tilde{K}}}_{1}}\left( n-1 \right) \right)}{\left( n+{{{\tilde{K}}}_{1}}\left( n \right)+{{{\tilde{K}}}_{2}}\left( n \right) \right)\left( n+1+{{{\tilde{K}}}_{1}}\left( n+1 \right)+{{{\tilde{K}}}_{2}}\left( n+1 \right) \right){{{\tilde{K}}}_{2}}\left( n-1 \right)}{{A}_{n-1}}.
\end{aligned}
$$
Using the expression: ${{A}_{n}}=\frac{{{{\tilde{K}}}_{2}}\left( n \right){{a}_{n}}}{\sum\nolimits_{i=1}^{n}{\left( i+{{{\tilde{K}}}_{1}}\left( i \right)+{{{\tilde{K}}}_{2}}\left( i \right) \right)}}$, we further have
$$
\begin{aligned}
&\frac{{{{\tilde{K}}}_{2}}\left( n+1 \right){{a}_{n+1}}}{\prod\limits_{i=1}^{n+1}{\left( i+{{{\tilde{K}}}_{1}}\left( i \right)+{{{\tilde{K}}}_{2}}\left( i \right) \right)}}\\
&\quad=\frac{{{{\tilde{K}}}_{3}}\left( n \right){{{\tilde{K}}}_{2}}\left( n+1 \right)\left( n-1+{{{\tilde{K}}}_{1}}\left( n-1 \right)+{{{\tilde{K}}}_{2}}\left( n-1 \right)+{{{\tilde{K}}}_{3}}\left( n-1 \right) \right)}{{{{\tilde{K}}}_{3}}\left( n-1 \right){{{\tilde{K}}}_{2}}\left( n \right)\left( n+1+{{{\tilde{K}}}_{1}}\left( n+1 \right)+{{{\tilde{K}}}_{2}}\left( n+1 \right) \right)}\frac{{{{\tilde{K}}}_{2}}\left( n \right){{a}_{n}}}{\prod\limits_{i=1}^{n}{\left( i+{{{\tilde{K}}}_{1}}\left( i \right)+{{{\tilde{K}}}_{2}}\left( i \right) \right)}} \\
&\quad-\frac{{{{\tilde{K}}}_{3}}\left( n \right){{{\tilde{K}}}_{2}}\left( n+1 \right)\left( n-1+{{{\tilde{K}}}_{1}}\left( n-1 \right) \right)}{\left( n+{{{\tilde{K}}}_{1}}\left( n \right)+{{{\tilde{K}}}_{2}}\left( n \right) \right)\left( n+1+{{{\tilde{K}}}_{1}}\left( n+1 \right)+{{{\tilde{K}}}_{2}}\left( n+1 \right) \right){{{\tilde{K}}}_{2}}\left( n-1 \right)}\frac{{{{\tilde{K}}}_{2}}\left( n-1 \right){{a}_{n-1}}}{\prod\limits_{i=1}^{n-1}{\left( i+{{{\tilde{K}}}_{1}}\left( i \right)+{{{\tilde{K}}}_{2}}\left( i \right) \right)}}.
\end{aligned}
$$
Thus, we obtain
\begin{equation}\label{eq:A2a}
\begin{aligned}
{{a}_{n+1}}=&\frac{{{{\tilde{K}}}_{3}}\left( n \right)\left( n-1+{{{\tilde{K}}}_{1}}\left( n-1 \right)+{{{\tilde{K}}}_{2}}\left( n-1 \right)+{{{\tilde{K}}}_{3}}\left( n-1 \right) \right)}{{{{\tilde{K}}}_{3}}\left( n-1 \right)}{{a}_{n}} \\
&-{{{\tilde{K}}}_{3}}\left( n \right)\left( n-1+{{{\tilde{K}}}_{1}}\left( n-1 \right) \right){{a}_{n-1}},
\end{aligned}
\end{equation}
where $n\ge 2$, ${{a}_{1}}={{\tilde{K}}_{3}}\left( 0 \right)$, and ${{a}_{2}}={{\tilde{K}}_{3}}\left( 1 \right)\left( {{{\tilde{K}}}_{3}}\left( 0 \right)+{{{\tilde{K}}}_{2}}\left( 0 \right) \right)$. In a similar way, we can prove
\begin{equation}\label{eq:A2b}
\begin{aligned}
{{b}_{n+1}}=&\frac{{{{\tilde{K}}}_{3}}\left( n \right)\left( n-1+{{{\tilde{K}}}_{1}}\left( n-1 \right)+{{{\tilde{K}}}_{2}}\left( n-1 \right)+{{{\tilde{K}}}_{3}}\left( n-1 \right) \right)}{{{{\tilde{K}}}_{3}}\left( n-1 \right)}{{b}_{n}} \\
&-{{{\tilde{K}}}_{3}}\left( n \right)\left( n-1+{{{\tilde{K}}}_{1}}\left( n-1 \right) \right){{b}_{n-1}},
\end{aligned}
\end{equation}
where $n\ge 2$, ${{b}_{1}}={{\tilde{K}}_{3}}\left( 0 \right)$, and ${{b}_{2}}={{\tilde{K}}_{3}}\left( 1 \right)\left( {{{\tilde{K}}}_{3}}\left( 0 \right)+{{{\tilde{K}}}_{2}}\left( 0 \right)+{{{\tilde{K}}}_{1}}\left( 0 \right) \right)$.

Second, we rewrite \eqref{eq:A2a} as
\begin{equation}\label{eq:A3}
\begin{aligned}
&\frac{{{{\tilde{K}}}_{3}}\left( n-1 \right)}{{{{\tilde{K}}}_{3}}\left( n \right)}\left[ {{a}_{n+1}}-{{{\tilde{K}}}_{3}}\left( n \right){{a}_{n}} \right]=\left[ n-1+{{{\tilde{K}}}_{1}}\left( n-1 \right)+{{{\tilde{K}}}_{2}}\left( n-1 \right) \right] \\
&\qquad \quad\times \left[ {{a}_{n}}-\frac{n-1+{{{\tilde{K}}}_{1}}\left( n-1 \right)}{n-1+{{{\tilde{K}}}_{1}}\left( n-1 \right)+{{{\tilde{K}}}_{2}}\left( n-1 \right)}{{{\tilde{K}}}_{3}}\left( n-1 \right){{a}_{n-1}} \right],
\end{aligned}
\end{equation}
where $n=2,3,\cdots $. Note that
$$
\begin{aligned}
{{\left[ {{a}_{n}}-\frac{n-1+{{{\tilde{K}}}_{1}}\left( n-1 \right)}{n-1+{{{\tilde{K}}}_{1}}\left( n-1 \right)+{{{\tilde{K}}}_{2}}\left( n-1 \right)}{{{\tilde{K}}}_{3}}\left( n-1 \right){{a}_{n-1}} \right]}_{n=2}}\\
\qquad \quad ={{\tilde{K}}_{3}}\left( 1 \right)\left[ \frac{{{{\tilde{K}}}_{2}}\left( 1 \right)}{1+{{{\tilde{K}}}_{1}}\left( 1 \right)+{{{\tilde{K}}}_{2}}\left( 1 \right)}{{{\tilde{K}}}_{3}}\left( 0 \right)+{{{\tilde{K}}}_{2}}\left( 0 \right) \right]>0.
\end{aligned}
$$
By the mathematical induction, we can prove
\begin{equation}\label{eq:A4}
{{a}_{n}}-\frac{n-1+{{{\tilde{K}}}_{1}}\left( n-1 \right)}{n-1+{{{\tilde{K}}}_{1}}\left( n-1 \right)+{{{\tilde{K}}}_{2}}\left( n-1 \right)}{{\tilde{K}}_{3}}\left( n-1 \right){{a}_{n-1}}>0,
\end{equation}
for $n=2,3,\cdots $. Thus, it follows from \eqref{eq:A3} that
\begin{equation}\label{eq:A5}
{{a}_{n+1}}>{{\tilde{K}}_{3}}\left( n \right){{a}_{n}},
\end{equation}
where $n=2,3,\cdots $. It also follows from \eqref{eq:A3} that
$$
\frac{{{{\tilde{K}}}_{3}}\left( n-1 \right)}{{{{\tilde{K}}}_{3}}\left( n \right)}\left[ {{a}_{n+1}}-{{{\tilde{K}}}_{3}}\left( n \right){{a}_{n}} \right]>\left[ n-1+{{{\tilde{K}}}_{1}}\left( n-1 \right)+{{{\tilde{K}}}_{2}}\left( n-1 \right) \right]\left[ {{a}_{n}}-{{{\tilde{K}}}_{3}}\left( n-1 \right){{a}_{n-1}} \right],
$$
from which we can have
\begin{equation}\label{eq:A6}
\left[ {{a}_{n+1}}-{{{\tilde{K}}}_{3}}\left( n \right){{a}_{n}} \right]>\left[ {{a}_{2}}-{{{\tilde{K}}}_{3}}\left( 1 \right){{a}_{1}} \right]\prod\limits_{k=2}^{n}{{{A}_{k}}},
\end{equation}
where ${{A}_{n}}=\frac{{{{\tilde{K}}}_{3}}\left( n \right)}{{{{\tilde{K}}}_{3}}\left( n-1 \right)}\left[ n-1+{{{\tilde{K}}}_{1}}\left( n-1 \right)+{{{\tilde{K}}}_{2}}\left( n-1 \right) \right]$, and $n=2,3,\cdots $. According to Lemma \ref{lm:1}, we can show
\begin{equation}\label{eq:A7}
{{a}_{n+1}}>\left[ \prod\limits_{i=2}^{n}{{{{\tilde{K}}}_{3}}\left( i \right)} \right]\left\{ {{a}_{2}}+\left[ {{a}_{2}}-{{{\tilde{K}}}_{3}}\left( 1 \right){{a}_{1}} \right]\sum\limits_{i=2}^{n}{\prod\limits_{k=2}^{i}{\frac{k-1+{{{\tilde{K}}}_{1}}\left( k-1 \right)+{{{\tilde{K}}}_{2}}\left( k-1 \right)}{{{{\tilde{K}}}_{3}}\left( k-1 \right)}}} \right\},
\end{equation}
where $n=2,3,\cdots $. Appareantly, ${{{a}_{n}}}/{n!}\;$ tends to infinity if ${{\tilde{K}}_{1}}\left( n \right)+{{\tilde{K}}_{2}}\left( n \right)>2$ for $n=2,3,\cdots $.

Third, \eqref{eq:A3} can be rewritten as
$$
{{x}_{n+1}}=\frac{{{{\tilde{K}}}_{3}}\left( n \right)\left( n-1+{{{\tilde{K}}}_{1}}\left( n-1 \right)+{{{\tilde{K}}}_{2}}\left( n-1 \right) \right)}{{{{\tilde{K}}}_{3}}\left( n-1 \right)\left( n+1 \right)}{{x}_{n}}+\frac{{{{\tilde{K}}}_{3}}\left( n \right)}{n+1}\left[ {{x}_{n}}-\frac{n-1+{{{\tilde{K}}}_{1}}\left( n-1 \right)}{n}{{x}_{n=1}} \right],
$$
where ${{x}_{n}}={{a}_{n}}/{n!}$ and $n=2,3,\cdots $. If ${{\tilde{K}}_{1}}\left( n-1 \right)+{{\tilde{K}}_{2}}\left( n-1 \right)\le 2$ and ${{\tilde{K}}_{3}}\left( n \right)\le M$, then when $n$ is sufficiently large, we have
$$
{{x}_{n+1}}-{{x}_{n}}\approx \frac{{{{\tilde{K}}}_{3}}\left( n \right)}{n}\left( {{x}_{n}}-{{x}_{n=1}} \right)\Rightarrow {{x}_{n+1}}-{{x}_{n}}\approx \left( {{x}_{2}}-{{x}_{1}} \right)\prod\limits_{k=2}^{n}{\frac{{{{\tilde{K}}}_{3}}\left( k \right)}{k}}.
$$
Thus, we can obtain ${{x}_{n+1}}\approx {{x}_{2}}+\left( {{x}_{2}}-{{x}_{1}} \right)\sum\limits_{m=2}^{n}{\frac{1}{m!}\prod\limits_{k=2}^{m}{{{{\tilde{K}}}_{3}}\left( k \right)}}$, implying that ${{x}_{n}}$ is convergent as $n\to \infty $. Numerical results are demonstrated in Figure \ref{f1}.

%%%%%%%%%%%%%%%%%%%%%%%%%%%%%%%%%%%%%%%%%%%%%%%%%%%%%%%%%%%%%%%%%%%%%%%%%%%%%%%%%%%%%%%%%%%%%%%%%%%%%%%%%%%%%%%%%%%%%%%%%%%%%%%%%%%%%%%%%%%%%%%%%%%%%%%%%%%%%%%%%%%%%%%%%%%%%%%
% Create the reference section using BibTeX:
\bibliography{library}
\bibliographystyle{plain}

\end{document}